\newtheorem{remark}[theorem]{Remark}
\def\fin{\ifmmode{\Large$\diamond$}\else{\unskip\nobreak\hfil
    \penalty50\hskip1em\null\nobreak\hfil{\Large$\diamond$}
    \parfillskip=0pt\finalhyphendemerits=0\endgraf}\fi}
\def\be#1#2\ee{\begin{equation}\label{eq:#1}#2\end{equation}}
\def\req#1{{\rm(\ref{eq:#1})}}
\def\bdm  {\begin{displaymath}}
  \def\edm  {\end{displaymath}}
\def\bdmal{\begin{displaymath}\begin{aligned}}
    \def\edmal{\end{aligned}\end{displaymath}}
\mathchardef\PhiG="0108
\renewcommand{\L}{{\mathscr L}}
\newcommand{\N}{{\mathord{\mathbb N}}}
\newcommand{\R}{{\mathord{\mathbb R}}}
\newcommand{\C}{{\mathord{\mathbb C}}}
\newcommand{\Z}{{\mathcal Z}}
\newcommand{\Ws}{W_\Sigma}
\newcommand{\bfh}{{\boldsymbol{h}}}
\newcommand{\B}{{\cal B}}
\newcommand{\Lrho}{{L_\varrho^\infty(\R^3)}}
\renewcommand{\L}{{\mathscr L}}
\newcommand{\norm}[1]{\|#1\|}
\newcommand{\Real}{{\rm Re\,}}
\newcommand{\rmd}{\,\mathrm{d}}
\newcommand{\rmi}{\mathrm{i}}
\newcommand{\eps}{\varepsilon}
\newcommand{\qbar}{{\overline q}}
\newcommand{\ahat}{{\widehat{a}}}
\newcommand{\omegahat}{{\widehat{\omega}}}
\newcommand{\Vhat}{{{\mathscr V}_{u^{\!\thsp\dagger}}}}
\newcommand{\zo}{{\overline{z}}}
\newcommand{\sigtilde}{{\widetilde y}}
\def\req#1{{\rm(\ref{eq:#1})}}
\newcommand{\dupdots}{\mathinner{\mkern1mu\raise\p@
    \vbox{\kern7\p@\hbox{.}}\mkern2mu
    \raise4\p@\hbox{.}\mkern2mu\raise7\p@\hbox{.}\mkern1mu}}
\def\thsp{\hspace*{0.1ex}}
\def\thbsp{\hspace*{-0.2ex}}
\newcommand{\gC}{\mathcal{C}}
\newcommand{\gT}{\mathcal{T}}
\newcommand{\gCC}{\mathfrak{C}}
\newcommand{\gTT}{\mathfrak{T}}
\newcommand{\ud}{{u^\dagger\thbsp}}
\newcommand{\gd}{{g^\dagger\thbsp}}
\newcommand{\sd}{{y^\dagger\thbsp}}
\newcommand{\utilde}{{\widetilde u}}
\newcommand{\ftilde}{{\widetilde f}}
\newcommand{\omtilde}{{\widetilde\omega}}
\newcommand{\U}{{{\mathscr U}}}
\newcommand{\V}{{{\mathscr V}_u}}
\newcommand{\RR}{{\boldsymbol{R}}}
\newcommand{\dzeta}{\rmd \zeta}
\newcommand{\dtheta}{\rmd \theta}
\newcommand{\dR}{\rmd\thbsp R}
\newcommand{\dRR}{\rmd\!\RR}
\newcommand{\crho}{c_\varrho}
\newcommand{\cbeta}{c_\beta}
\newcommand{\Cbeta}{C_\beta}
\newcommand{\bfe}{{\boldsymbol{e_1}}}
\newcommand{\bfsigma}{{\boldsymbol{\sigma}}}
\renewcommand\@biblabel[1]{#1.}
\title{Well-posedness of the Iterative Boltzmann Inversion}
\author{Martin Hanke\thanks{Institut f\"ur Mathematik, Johannes
    Gutenberg-Universit\"at Mainz, 55099 Mainz, Germany
    ({\tt hanke@math.uni-mainz.de}). The research leading to this work
    has been done within the 
    Collaborative Research Center TRR 146; corresponding funding 
    by the DFG is gratefully acknowledged.}}
\begin{document}
\sloppy
\maketitle

\begin{abstract}
The iterative Boltzmann inversion is an iterative scheme 
to determine an effective pair potential for an ensemble of identical particles
in thermal equilibrium from the corresponding radial distribution function. 
Although the method is reported to work reasonably well in practice, it still
lacks a rigorous convergence analysis.
In this paper we provide some first steps towards such an analysis, and we 
show under quite general assumptions that the algorithm is
well-defined in a neighborhood of the true pair potential, assuming that
such a potential exists. 

On our way we establish important properties
of the cavity distribution function and provide a proof of a statement 
formulated by Groeneveld concerning the rate of decay at infinity of the 
Ursell function associated with a Lennard-Jones type potential.
\end{abstract}

\begin{keywords}
  Statistical mechanics, cluster expansion, grand canonical ensemble,
  radial distribution function, cavity distribution function,
  Fr\'echet derivative
\end{keywords}

\begin{AMS}
  {\sc 82B21, 82B80}
\end{AMS}

\hspace*{-0.7em}
{\footnotesize \textbf{Last modified.} \today}

\pagestyle{myheadings}
\thispagestyle{plain}
\markboth{M. HANKE}
{ITERATIVE BOLTZMANN INVERSION}

\addtocounter{footnote}{1}

\section{Introduction}
\label{Sec:Introduction}
Numerical simulations of complex materials in physical chemistry
are so time-consuming, 
even with today's computing power at hand, that it is necessary to
implement subprocesses on a meso-scale by means of
``coarse-graining'' the atomistic structure of (parts of) the associated 
molecules.
The coarse-grained ``beads'' are simulated by using effective
potentials for their interactions. These
effective potentials have to be determined a priori, and this is often done 
so as to match some given structural data.

Here we consider the case where this structural information consists of
measurements of the so-called \emph{radial distribution function} $\gd$ 
of the beads; see~\req{g} for a formal definition of
this function. The effective potential $\ud$ is then chosen in such a way that
\be{Henderson-eq}
   F(\ud) \,=\, \gd\,,
\ee  
where $F$ is the function which maps a potential $u$
(out of a predetermined family of suitable functions) onto the 
corresponding radial distribution function of the associated
\emph{grand canonical ensemble} under well-defined physical conditions.
The question of existence and uniqueness of a solution
$\ud$ of \req{Henderson-eq} for a given $\gd$ is referred to as the
\emph{inverse Henderson problem},
because Henderson~\cite{Hend74} was the first to investigate the 
identifiability problem associated with \req{Henderson-eq}, i.e., 
whether the radial distribution function is enough data 
to uniquely recover the underlying pair potential; 
see Kuna, Lebowitz, and Speer~\cite{KLS07} for a more rigorous mathematical
treatment of the uniqueness problem.

A popular method for solving numerically the inverse Henderson problem is the
\emph{iterative Boltzmann inversion} (IBI) suggested by Soper~\cite{Sope96}. 
This method, which is available in pertinent public domain software like
{\sc votca}\footnote{{\tt http://www.votca.org}}~\cite{RJLKA09} 
starts from an initial guess $u_0$ 
and determines recursively an iterative sequence $(u_k)_{k\geq 0}$ 
of approximate solutions of \req{Henderson-eq} via
\be{IBI}
   u_{k+1} \,=\, u_k \,+\, \gamma \log \frac{F(u_k)}{\gd}\,, \qquad
   k=0,1,2,\dots\,.
\ee
Here $\gamma>0$ is a relaxation parameter that is usually chosen to be
\be{omega1}
   \gamma \,=\, 1/\beta\,,
\ee
where $\beta>0$ is the inverse temperature.

A mathematical analysis of the IBI method is still lacking although the method
seems to be fairly robust. In his original paper \cite{Sope96} Soper
provided a heuristic argument 
why IBI might be expected to converge, however, there is little hope to turn 
this argument into a rigorous proof.

From a mathematical point of view a possible framework for studying \req{IBI}
is fixed point iteration theory, where
\bdm
   \Phi(u) \,=\, u \,+\, \gamma \log \frac{F(u)}{\gd}
\edm
is the corresponding fixed point operator. This is the point of view
taken in this paper. Note, however, that currently we are not yet able to 
prove existence of a fixed point of $\Phi$ under reasonable assumptions on 
$\gd$ and the set of admissible pair potentials, although this
is evidently a necessary requirement for convergence of \req{IBI};
the only result in this direction that we are aware of has been 
provided by Koralov~\cite{Kora07}.

Instead we stipulate that a solution $\ud$ of \req{Henderson-eq} exists,
in which case we show that the first few iterates of~\req{IBI} are 
well-defined when the initial guess $u_0$ is chosen from a 
suitable neighborhood of $\ud$.
Additional assumptions that we impose are (i) that $\ud$ is a
\emph{Lennard-Jones type potential}, and
(ii) that the grand canonical ensemble is in the so-called \emph{gas phase};
see below.

The outline of this paper is as follows. In the following section we specify
our requirements on the family of admissible pair potentials. 
We then review in Section~\ref{Sec:Grandcanonical} the necessary background 
concerning the associated grand canonical ensemble and its thermodynamical 
limit, and investigate in more detail 
the so-called \emph{cavity distribution function}.
Section~\ref{Sec:Inequality} contains an auxiliary result on autoconvolution
products of a certain class of functions which include the 
\emph{Mayer function}. This will be applied in Section~\ref{Sec:Ursell} 
to discuss the rate of decay of the \emph{Ursell function} 
for large radii and to improve upon our earlier results in \cite{Hank16b} 
on the derivative of the Ursell function with respect to the pair potential. 
This provides the main ingredients of our analysis of the IBI method in 
Section~\ref{Sec:IBI}. 

We mention that although we treat IBI in the context of a 
grand canonical ensemble, it is possible to extend this analysis 
to a canonical ensemble, which is the more usual setting of numerical 
simulations in practice.

\section{Setting}
\label{Sec:Setting}
We start by considering an ensemble of identical classical particles in thermal
equilibrium within a bounded cubical box $\Lambda\subset\R^3$ centered at the
origin. We assume that the interaction of the particles
can be described by a pair potential $u=u(r)$, which 
only depends on the distance $r>0$ of the interacting particles,
and that this potential satisfies
\be{LJtype}
\begin{aligned}
   |u(r)| &\,\leq\, C r^{-\alpha}\,, \quad &r&\geq r_0\,, \\[1ex]
    u(r)\phantom{|} &\,\geq\, c r^{-\alpha}\,, \quad &r&\leq r_0\,,
\end{aligned}
\ee
for some $\alpha>3$, $r_0>0$, and parameters $c,C$ satisfying
$C_0>C>c>c_0>0$; here, $\alpha$, $r_0$, $c_0$, and $C_0$ are fixed constants, 
and we denote by $\U=\U(\alpha,r_0,c_0,C_0)$ 
the family of potentials $u$ satisfying~\req{LJtype}.
Following Ruelle~\cite{Ruel69} potentials $u\in\U$ are called
Lennard-Jones type pair potentials.

Under this assumption it is known 
(cf.~Fischer and Ruelle~\cite{FiRu66}) that there exists $B>0$ such that
\be{B}
   U_N(\RR_N) \,:=\!\! \sum_{1\leq i<j\leq N} \!\! u(|R_i-R_j|) \,\geq\, -BN
\ee
for every configuration of $N$ particles in free space and every $N\in\N$; 
here we denote by $R_i\in\R^3$ the coordinates of the $i$th particle, and
by $\RR_N=(R_1,\dots,R_N)\in(\R^3)^N$ the configuration of the first $N$
particles. It follows that for every $N\in\N$ and $\RR_N\in(\R^3)^N$ there is
an index $i^*(\RR_N)$ such that
\be{jstar}
   \sum_{i=1\atop \,\,\,i\neq i^*}^N u(|R_i-R_{i^*}|) \,\geq\, -2B\,.
\ee

Associated with $u\in\U$ and the inverse temperature $\beta>0$ is
the Mayer function
\be{Mayer}
   f(R) \,=\, e^{-\beta u(|R|)} - 1\,,
\ee
which is considered to be $-1$ at the origin $R=0$.
Because of \req{LJtype} the Mayer function is absolutely
integrable, i.e., there exists $\cbeta>0$ such that
\be{cbeta}
   \int_{\R^3} |f(R)|\dR \,<\, \cbeta\,.
\ee
By virtue of \req{B} and \req{cbeta}
every Lennard-Jones type potential is \emph{stable} and \emph{regular}
in the sense of \cite{Ruel69}.

As worked out in the proof of \cite[Proposition~2.1]{Hank16a}, 
for every $u\in\U$ the same constants $B$ and $\cbeta$ can be used in \req{B}
and \req{cbeta}, respectively, 
and also the indices $i^*$ in \req{jstar} can be chosen independent of $u\in\U$.

Associated with the constant $\alpha$ in \req{LJtype} is the weight function
\be{walpha}
   \varrho(r) \,=\, (1+r^2)^{\alpha/2}\,, \qquad r\geq 0\,.
\ee
For every $u\in\U$ we can use this weight function to define
a corresponding Banach space $\V$ of \emph{perturbations}, consisting of
all functions $v$ for which the associated norm
\be{Vd}
   \norm{v}_\V \,=\, 
   \max\bigl\{ \norm{v/u}_{(0,r_0]}, \norm{\varrho v}_{[r_0,\infty)} \bigr\}
\ee
is finite. Clearly, for every $u\in\U$ there exists 
$\delta_0=\delta_0(u)\in(0,1)$ 
sufficiently small such that $u+v\in\U$ for every $v\in\V$ with
$\norm{v}_\V\leq \delta_0$.
Later, compare \req{ustardef} and \req{ustar}, we will reduce the size of
$\delta_0(u)$ somewhat further to ensure additional properties of $u+v$
for all $v\in\V$ with $\norm{v}_\V\leq \delta_0$. 

\section{The grand canonical ensemble}
\label{Sec:Grandcanonical}
Let $u\in\U$ be the pair potential that determines the interaction
of the particles. 
In the grand canonical ensemble the number of particles and their coordinates 
within $\Lambda$ are random variables, and the probability of observing
an ensemble with exactly $N$ particles in an infinitesimal volume $\dRR_N$
at the coordinates $\RR_N\in\Lambda^N$
(up to permutations) is given by
\bdm
   \frac{1}{\Xi_\Lambda(z)}\,\frac{z^N}{N!}\,e^{-\beta U_N(\RR_N)}\dRR_N\,,
\edm
where $z>0$ is the so-called \emph{activity},
$U_N(\RR_N)$ is defined in \req{B}, and
\bdm
   \Xi_{\Lambda}(z) \,=\, \sum_{N=0}^\infty \frac{z^N}{N!}
         \int_{\Lambda^N} e^{-\beta U_N(\RR_N)}\dRR_N
\edm
is the grand canonical partition function. 
We consider this grand canonical ensemble under specified physical conditions,
i.e., we assume that the activity $z$ and the inverse temperature $\beta$
are given (and fixed), and that they satisfy the inequality
\be{Iz}
   0<z<\frac{1}{\cbeta e^{2\beta B+1}}\,,
\ee
where $\cbeta$ and $B$ are the constants in \req{cbeta} and \req{B}, 
respectively. 
This regime is known as the gas phase of the ensemble, cf.~\cite{Ruel69}.

For $m\in\N$ and $\RR_m\in\Lambda^m$ 
the $m$-particle distribution function is given by
\bdm
   \rho_{\Lambda}^{(m)}(\RR_m)
   \,=\, \frac{1}{\Xi_{\Lambda}(z)} \sum_{N=m}^\infty \frac{z^N}{(N-m)!}
         \int_{\Lambda^{N-m}} e^{-\beta U_N(\RR_N)}\dRR_{m,N}\,,
\edm
where $\RR_{m,N}=(R_{m+1},\dots,R_N)$; $\rho_\Lambda^{(m)}$ determines the 
probability distribution for snap shots with $m$ particles 
(up to permutations) at coordinates $R_1,\dots,R_m\in\Lambda$.\footnote{If 
$R_i=R_j$ for different indices $i,j\in\{1,\dots,m\}$ then
$\rho_\Lambda^{(m)}(\RR_m)$ is set to be zero.} 
The grand canonical partition function $\Xi_\Lambda$ can be seen to be
an entire function of $z\in\C$, which is free of zeros for
\bdm
   z\in\Z\,=\,
   \bigl\{\,z\in\C\,:\, |z| \,<\, \frac{1}{\cbeta e^{2\beta B+1}}\,
   \bigr\}\,,
\edm
compare \cite[Theorem~4.2.3]{Ruel69}, and similarly, the
$m$-particle distribution functions all are analytic functions of $z\in\Z$.
On the other hand, the $m$-particle distribution functions may encounter
singularities for positive values of $z$ outside the interval~\req{Iz}; 
those are understood to correspond to physical phase transitions.

As shown in \cite{Ruel69} the $m$-particle distribution function 
has a well-defined \emph{thermodynamical limit}, i.e., 
$\rho_{\Lambda}^{(m)}$ converges to some $\rho^{(m)}\in L^\infty\bigl((\R^3)^m\bigr)$
as $|\Lambda|\to\infty$, uniformly on every compact subset of $(\R^3)^m$
and for activities $z$ from every compact subset of $\Z$, 
this being true for every $m\in\N$; 
here, $|\Lambda|$ denotes the volume of the box.
In particular, for $m=1$, the thermodynamical limit
\bdm
   \rho^{(1)}(R) \,=\, \rho_0 \,\in\R_0^+
\edm
is independent of $R\in\R^3$ and provides the counting density of the ensemble;
for $m=2$, $\rho^{(2)}(R_1,R_2)$ only depends on the distance 
$r=|R_1-R_2|\geq 0$.
Given these two functions the radial distribution function, 
referred to in the introduction, is defined to be
\be{g}
   g(r) \,=\, \frac{1}{\rho_0^2}\,\rho^{(2)}(R,0)\,, \qquad |R|=r\geq 0\,.
\ee

For $m\in\N_0$ and $z\in\Z$ the function $\rho^{(m)}$ is 
Fr\'echet differentiable
with respect to $u$, i.e., with respect to perturbations $v\in\V$ of $u$, 
cf.~\cite{Hank16a}. The derivative is a bounded linear operator
$\partial\rho^{(m)}\in\L\bigl(\V,L^\infty((\R^3)^m)\bigr)$.
A similar result (see \cite[Remark~3.4]{Hank16a} for details)
applies to certain weighted copies of the particle distribution functions, 
which include the cavity distribution function 
(cf.~Hansen and McDonald~\cite{HaMcD13})
\be{y}
   y(r) \,=\, e^{\beta u(r)} g(r)\,, \qquad r>0\,,
\ee
as a special case; the following result elaborates on this. 

\begin{proposition}
\label{Prop:rho2-low}
For $u\in\U$ and $z\in\Z$ the cavity distribution function $y$ of \req{y} is
a bounded function of $r>0$, which is analytic with respect to $z\in\Z$ 
and uniformly bounded on every compact subset of $\Z$.
Moreover, $y$ is Fr\'echet differentiable with respect to
$u$ with derivative $\partial y \in \L(\V,L^\infty(\R^+))$.
For
\be{z-bound}
   0 < z \leq \zo < \frac{1}{1+e}\frac{1}{\cbeta e^{2\beta B+1}}
\ee
the cavity distribution function is strictly positive, i.e.,
there exists $c>0$ (depending only on $\zo$) such that 
\be{sigmastar}
   y(r) \,\geq\, c\,\frac{z^2}{\rho_0^2}\,, 
   \qquad r>0\,,
\ee
for all $u\in\U$.
\end{proposition}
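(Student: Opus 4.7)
The plan is to combine the framework of \cite{Hank16a} for $m$-particle distribution functions with a classical Mayer cluster expansion of the cavity distribution function, the former yielding the analyticity/boundedness and differentiability statements, the latter the uniform lower bound.

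For the first two assertions I would regard $y$ not as the product $e^{\beta u(r)}g(r)$ -- the naive product rule $\partial y[v] = \beta v\,y + e^{\beta u}\,\partial g[v]$ fails to produce a bounded function because $v\in\V$ may blow up like $u$ as $r\to 0^+$ -- but as the rescaled weighted two-particle density
$$y(r) \,=\, \frac{1}{\rho_0^2}\bigl(e^{\beta u(r)}\rho^{(2)}(R)\bigr)\,,\qquad|R|=r\,.$$
In finite volume $\Lambda$, the weighted density $\rho_\Lambda^{(2),*}(R_1,R_2) := e^{\beta u(|R_1-R_2|)}\rho_\Lambda^{(2)}(R_1,R_2)$ has the explicit expansion
$$\rho_\Lambda^{(2),*}(R_1,R_2) \,=\, \frac{z^2}{\Xi_\Lambda(z)}\sum_{M=0}^\infty\frac{z^M}{M!}\int_{\Lambda^M}e^{-\beta U^*_{M+2}(\RR_{M+2})}\,\dRR_{3,M+2}\,,$$
in which $U^*_{M+2}$ denotes the interaction energy of $M+2$ particles with the $(1,2)$-bond removed. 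Since removing a bond can only increase the energy, Ruelle's stability bound \req{B} still applies, so the integrand is uniformly controlled by $e^{2\beta B(M+2)}$, placing $\rho_\Lambda^{(2),*}$ precisely in the class of weighted distribution functions covered by \cite[Remark~3.4]{Hank16a}. The claimed boundedness of $y$, its analyticity in $z\in\Z$ with uniform bounds on compact subsets, and its Fr\'echet differentiability with derivative in $\L(\V,L^\infty(\R^+))$ then follow from the corresponding results of \cite{Hank16a} applied to $\rho^{(2),*}/\rho_0^2$, after passage to the thermodynamical limit.

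For the lower bound \req{sigmastar}, set $q(r) := \rho_0^2\,y(r)/z^2 = z^{-2}e^{\beta u(r)}\rho^{(2)}(R)$; the claim is then equivalent to $q(r)\geq c$ uniformly in $r>0$ and $u\in\U$. The function $q$ admits a Mayer cluster expansion
$$q(r) \,=\, 1 + \sum_{n\geq 1} q_n(r;u)\,z^n\,,$$
whose coefficients $q_n$ are finite sums of integrals of products of Mayer functions over connected graphs on $n+2$ vertices with the root $(1,2)$-bond deleted. Standard Penrose tree-graph estimates, together with the uniform Ruelle inequalities \req{B} and \req{cbeta} which (as noted after \req{jstar}) hold with the same constants $B$ and $\cbeta$ for every $u\in\U$, yield a majorant of the form
$$\sum_{n\geq 1}\abs{q_n(r;u)}\,|z|^n \,\leq\, \sum_{n\geq 1}a_n\bigl(\cbeta e^{2\beta B}|z|\bigr)^n\,,$$
with universal tree-graph coefficients $a_n$; the series on the right has radius of convergence $1/e$, and the extra factor $1/(1+e)$ in the sharpened activity bound \req{z-bound} is chosen precisely so as to force this sum to be strictly less than $1$. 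Hence $q(r)\geq 1-\eta$ for some $\eta=\eta(\zo)\in(0,1)$, which yields \req{sigmastar} with $c=1-\eta$.

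The main obstacle I anticipate is verifying the \emph{uniformity} of the Penrose-type bounds over $u\in\U$ -- in particular, that the graph-combinatorial estimates continue to hold after deletion of the root bond, and that $B$ and $\cbeta$ may indeed be taken independent of $u\in\U$ (which is asserted in the text but needs to be re-used here). Once the uniform majorant is established, both the differentiability claim and the positivity of $y$ reduce to standard cluster-expansion book-keeping, with no further delicate analysis required.
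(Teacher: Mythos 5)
For the boundedness, analyticity in $z$, and Fr\'echet differentiability of $y$, your plan coincides with the paper's: both identify $\rho_0^2\,y$ with the weighted two-particle density (the second entry $\sigma^{(2)}$ of the vector $\bfsigma$ from \cite[Remark~3.4]{Hank16a}) and import the corresponding boundedness and differentiability statements from that reference. For the lower bound~\req{sigmastar}, however, the routes differ. The paper stays inside the Kirkwood-Salsburg machinery: it develops $(I-zB)^{-1}$ in a Neumann series, reads off the leading term $\sigma^{(2)}(R,0)=z^2+h^{(2)}(R,0)$ from the algebraic identity $b_{21}=1$, and bounds $h^{(2)}$ by the already-available operator-norm estimate, so that the lower bound drops out of a one-line inequality; the role of the extra factor $1/(1+e)$ in~\req{z-bound} is completely explicit there (it makes $ze\,\cbeta e^{2\beta B+1}/(1-z\cbeta e^{2\beta B+1})<1$). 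You instead propose a direct Mayer/Penrose cluster expansion of $q=\rho_0^2y/z^2$ and a tree-graph majorant. That strategy is in the right spirit and can be made to work, but as written it has two soft spots. First, the combinatorics are imprecise: after multiplying $\rho^{(2)}$ by $e^{\beta u}$ and expanding $e^{-\beta U^*_{M+2}}$, the diagrams contributing to $q_n$ are graphs on $n+2$ points in which every vertex is linked to the root set $\{1,2\}$ but the edge $(1,2)$ itself never occurs -- these are not simply ``connected graphs with the root bond deleted,'' and the Penrose/tree-graph estimate has to be adapted to that class. Second, the assertion that the majorant has ``radius of convergence $1/e$'' and that this alone lets the factor $1/(1+e)$ push the tail below $1$ is a heuristic about the coefficients $a_n$, not a computation; one must actually track the constants (as the paper does via the Kirkwood-Salsburg operator bound) to confirm the numerology. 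There is also a minor lapse: ``removing a bond can only increase the energy'' is false for a general pair potential; it is harmless here only because Lennard-Jones type potentials are uniformly bounded below on $[r_0,\infty)$, so deleting one bond degrades the stability constant by at most a bounded amount. In short: your route is plausible and closer to a raw cluster-expansion proof, but it re-derives (less sharply and with hand-waving at the combinatorial step) what the paper obtains cleanly from the Kirkwood-Salsburg Neumann series already set up in \cite{Hank16a}.
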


\begin{proof}
The function
\bdm
   \sigma^{(2)}(R,0) \,=\, \rho_0^2\,y(|R|)
\edm
is the second entry of the semi-infinite vector $\bfsigma=\bfsigma_{\R^3}$ 
considered in \cite[Remark~3.4]{Hank16a}. There it is shown that $\bfsigma$
satisfies a system 
\be{KirkwoodSalsburg}
   (I-zB)\bfsigma \,=\, z\bfe\,, \qquad 
   B=KD\,,
\ee
of Kirkwood-Salsburg integral equations, and that $\bfsigma$ has certain
differentiability properties. These properties readily imply differentiability
of $y$ with respect to $z$ and $u$ as stated above.

In \req{KirkwoodSalsburg} we have adopted notation of \cite{Hank16a}:
$K$ is a semi-infinite matrix of integral operators, $D$ a diagonal 
multiplication operator, and $I$ the corresponding identity operator;
$\bfe$ is a vector of constant functions, its first entry being identically 
one, and all other entries being zero. 
To establish the lower bound~\req{sigmastar} for
the specific real interval of activity parameters $z$ given in \req{z-bound},
we first note that $I-zB$ can be developed into a Neumann series,
and hence we can rewrite \req{KirkwoodSalsburg} in the form
\be{3.sigma}
   \bfsigma 
   \,=\, z\bfe \,+\, z^2B\bfe \,+\, \bfh\,,
\ee
where
\bdm
   \bfh \,=\, (h^{(m)})_m
   \,=\, z^3(I-zB)^{-1}B KD\bfe
   \,=\, z^3(I-zB)^{-1}B K \bfe\,,
\edm
because the $(1,1)$-entry of $D$ is an identity operator.
Looking at the second entry of the vector identity~\req{3.sigma} 
we conclude that
\be{KStmp}
   \sigma^{(2)}(R,0) - z^2 \,=\, h^{(2)}(R,0)\,,
\ee
because the second entry $b_{21}$ of $B\bfe$ is again a constant, i.e.,
$b_{21}=1$; compare~\cite{Hank16a}.
For $z\in\Z$ the right-hand side of \req{KStmp} 
can be bounded as in \cite{Hank16a}, which gives
\bdm
   \norm{h^{(2)}}_{L^\infty((\R^3)^2)} \,\leq\, 
   z^3 e \,\frac{\cbeta e^{2\beta B+1}}{1-z\cbeta e^{2\beta B+1}}\,.
\edm
Therefore there exists $c>0$ such that
for every $R\in\R^3$ with $|R|=r>0$ there holds
\bdm
   y(r) \,=\, \frac{\sigma^{(2)}(R,0)}{\rho_0^2}
   \,\geq\, \frac{1}{\rho_0^2}\,z^2 
            \Bigl(1-ze\,\frac{\cbeta e^{2\beta B+1}}{1-z\cbeta e^{2\beta B+1}}\Bigr)
   \,\geq\, c\,\frac{z^2}{\rho_0^2} \,,
\edm
provided $0<z\leq\zo$. 
\end{proof}

\section{An auxiliary inequality for autoconvolution products}
\label{Sec:Inequality}
Before we continue we define the Banach space $L_\varrho^\infty(\R^3)$ 
of functions $v\in L^\infty(\R^3)$ with finite norm
\be{Linftyrho}
   \norm{w}_{L_\varrho^\infty(\R^3)} \,=\, \sup_{R\in\R^3} \varrho(|R|)|w(R)|\,,
\ee
where $\varrho$ is as in \req{walpha}. We mention that the Mayer $f$-function
defined in \req{Mayer} belongs to this space by virtue of \req{LJtype}.
Note that $L_\varrho^\infty(\R^3)$ is continuously embedded in $L^1(\R^3)$
and $L^\infty(\R^3)$, because the parameter $\alpha$ in \req{LJtype} is assumed 
to satisfy $\alpha>3$. It readily follows that the convolution $w* w'$ 
of two functions $w,w'\in\Lrho$ is an absolutely integrable function. 
In fact, we show next that the result belongs to $\Lrho$ again.

\begin{proposition}
\label{Prop:Banachalgebra}
Let $w,w'\in\Lrho$. Then $w* w'\in\Lrho$ with
\be{Banachalgebra}
   \norm{w* w'}_\Lrho \,\leq\, \crho 2^{\alpha+1} \norm{w}_\Lrho \norm{w'}_\Lrho\,,
\ee
where $\crho$ is the embedding constant for the embedding of
$\Lrho$ into $L^1(\R^3)$. 
\end{proposition}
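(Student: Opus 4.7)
The plan is to prove the inequality by a standard Peetre-type splitting of the convolution integral, exploiting the triangle inequality $|R| \leq |R - R'| + |R'|$ to ensure that at least one of $|R - R'|$, $|R'|$ is comparable to $|R|$.

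First I would fix $R \in \R^3$ and split $\R^3$ (the domain of integration in $R'$) into the two pieces
$$A_1 \,=\, \{R'\in\R^3 : |R'| \geq |R|/2\}\,, \qquad A_2 \,=\, \{R'\in\R^3 : |R-R'| \geq |R|/2\}\,,$$
whose union covers all of $\R^3$ by the triangle inequality. The key auxiliary estimate is the Peetre-type inequality
$$\varrho(|R|) \,=\, (1+|R|^2)^{\alpha/2} \,\leq\, (1+4|R'|^2)^{\alpha/2} \,\leq\, 2^\alpha \varrho(|R'|)$$
valid on $A_1$, and analogously $\varrho(|R|) \leq 2^\alpha \varrho(|R-R'|)$ on $A_2$.

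Next I would bound the portion of the convolution integral over $A_1$ by pulling $\varrho(|R|)$ inside the integral, using the Peetre bound to convert it into $2^\alpha \varrho(|R'|)$, then factoring out the supremum $\|w'\|_{L_\varrho^\infty}$:
$$\varrho(|R|) \!\int_{A_1}\! |w(R-R')||w'(R')|\,\dR' \,\leq\, 2^\alpha \|w'\|_{\Lrho}\! \int_{\R^3}\! |w(R-R')|\,\dR' \,\leq\, \crho 2^\alpha \|w\|_{\Lrho} \|w'\|_{\Lrho}\,,$$
where the final step invokes the continuous embedding $\Lrho \hookrightarrow L^1(\R^3)$ with constant $\crho$ (as noted in the paragraph just preceding the proposition). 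The integral over $A_2$ is handled symmetrically, with the roles of $w$ and $w'$ interchanged via the substitution $R'' = R - R'$, giving the same bound $\crho 2^\alpha \|w\|_{\Lrho}\|w'\|_{\Lrho}$. Summing the two contributions and taking the supremum over $R \in \R^3$ yields \req{Banachalgebra}.

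There is no genuine obstacle here: the argument is entirely elementary, and the only care needed is to ensure that the factor $2^{\alpha+1}$ on the right-hand side matches precisely (which it does because of the factor $2^\alpha$ from Peetre plus the factor $2$ from summing the two pieces of the dyadic split). In particular, no cancellation, no Fubini beyond the obvious, and no structural assumption beyond $\alpha > 3$ (which is only needed for the embedding $\Lrho \hookrightarrow L^1$ to supply $\crho$) enters the argument.
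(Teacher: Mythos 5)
Your argument is correct and is essentially the paper's proof: the author splits $\R^3$ into the ball $\B_{\eps|R|}(R)$ and its complement, applies the same Peetre-type bounds $\varrho(|R|)\le(1-\eps)^{-\alpha}\varrho(|R'|)$ and $\varrho(|R|)\le\eps^{-\alpha}\varrho(|R-R'|)$ on the two pieces, and sets $\eps=1/2$ at the end. The only (cosmetic) difference is that the paper keeps $\eps$ free until the final step --- which it exploits later in the proof of Proposition~\ref{Prop:Wp-bound} with a different choice of $\eps$ --- whereas you fix $\eps=1/2$ from the outset and use an overlapping cover rather than a disjoint partition.
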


\begin{proof}
For $R\in\R^3$ and $0<\eps<1$ we consider the ball $\B_{\eps|R|}(R)\subset\R^3$ 
of radius $\eps|R|$ around $R$. Depending on whether $R'$ is inside or outside
this ball, there holds
\begin{subequations}
\label{eq:eps-alpha-tmpX}
\begin{align}
\label{eq:eps-alpha-tmpX1}
    1 + |R'|^2 & \,\geq\, (1-\eps)^2 (1+|R|^2)\,,
    & R'&\in\B_{\eps|R|}(R)\,,\\[1ex]
\label{eq:eps-alpha-tmpX2}
   1+|R'-R|^2 & \,\geq\, \eps^2(1+|R|^2)\,, & 
   R'&\in\R^3\setminus\B_{\eps|R|}(R)\,.
\end{align}
\end{subequations}
Using \req{eps-alpha-tmpX1} it follows for every $R\in\R^3$ that
\begin{align*}
   &\varrho(|R|)\, 
   \Biggl| \int_{\B_{\eps|R|}(R)} w(R-R') w'(R')\dR'\,\Biggr| \\[1ex]
   &\qquad
    \,\leq\, \int_{\B_{\eps|R|}(R)} \frac{\varrho(|R|)}{\varrho(|R'|)} \,
             \bigl|w(R-R')\bigr|\, \varrho(|R'|)\,\bigl|w'(R')\bigr|\dR'\\[1ex]
   &\qquad
    \,\leq\, \frac{1}{(1-\eps)^\alpha}\, \norm{w'}_{L_\varrho^\infty(\R^3)}
             \int_{\B_{\eps|R|}(R)} \bigl|w(R-R')\bigr|\dR'\\[1ex]
   &\qquad
    \,\leq\, \frac{1}{(1-\eps)^\alpha}\, \norm{w}_{L^1(\R^3)}
             \norm{w'}_{L_\varrho^\infty(\R^3)}\,,
\intertext{while \req{eps-alpha-tmpX2} implies that}
   &\varrho(|R|) \,
    \Biggl| \int_{\R^3\setminus\B_{\eps|R|}(R)} w(R-R') w'(R')\dR'\,\Biggr| \\[1ex]
   &\qquad
    \,\leq\, \int_{\R^3\setminus\B_{\eps|R|}(R)} 
             \frac{\varrho(|R|)}{\varrho(|R-R'|)}\,
             \varrho(|R-R'|)\,\bigl|w(R-R')\bigr|\,\bigl|w'(R')\bigr|\dR'\\[1ex]
   &\qquad
    \,\leq\, \frac{1}{\eps^\alpha}\,\norm{w}_{L_\varrho^\infty(\R^3)}
             \int_{\R^3\setminus\B_{\eps|R|}(R)} \bigl|w'(R')\bigr|\dR' 
    \,\leq\, \frac{1}{\eps^\alpha}\,
             \norm{w}_{L_\varrho^\infty(\R^3)}\,\norm{w'}_{L^1(\R^3)}\,.
\end{align*}
Adding these two inequalities we thus conclude that
\begin{align}
\label{eq:Wp-recX}
   \norm{w* w'}_{L_\varrho^\infty(\R^3)}
   &\,\leq\, \frac{1}{(1-\eps)^\alpha}\,\norm{w}_{L^1(\R^3)}
             \norm{w'}_{L_\varrho^\infty(\R^3)}
             \,+\, \frac{1}{\eps^\alpha}\,
                   \norm{w}_{L_\varrho^\infty(\R^3)}\,\norm{w'}_{L^1(\R^3)}\\[1ex]
\nonumber
   &\,\leq\, \Bigl(\frac{1}{(1-\eps)^\alpha} + \frac{1}{\eps^\alpha}\Bigr)
             \crho \norm{w}_\Lrho \norm{w'}_\Lrho\,,
\end{align}
where $\crho$ is the embedding constant for the embedding 
$\Lrho\subset L^1(\R^3)$. 
By choosing $\eps=1/2$ we finally obtain \req{Banachalgebra}.
\end{proof}

Proposition~\ref{Prop:Banachalgebra} implies that we can rescale the norm
of $\Lrho$ to make $\Lrho$ a commutative Banach algebra with respect to 
convolution. 

For $w\in\Lrho$ and $n\in\N$ let $W_n$ be the $n$-fold autoconvolution of $w$, 
i.e.,
\be{Uell}
   W_1=w\,, \qquad W_{n+1} = w*W_n\,, \quad n\geq 1\,.
\ee
By virtue of Proposition~\ref{Prop:Banachalgebra} each $W_n$ belongs to 
$\Lrho$, and there holds
\be{Wp-L1}
   \norm{W_n}_{L^1(\R^3)} \,\leq\, \norm{w}_{L^1(\R^3)}^n\,, \qquad n\geq 1\,.
\ee

\begin{proposition}
\label{Prop:Wp-bound}
Assume that $w\in\Lrho$ satisfies
\be{L1normf}
   \norm{w}_{L^1(\R^3)} \,=\, q \,<\,1\,,
\ee
and let $\qbar\in(q,1)$. Then the autoconvolution products $W_n$
defined in \req{Uell} satisfy
\be{Wp-bound}
   \norm{W_n}_{L_\varrho^\infty(\R^3)} \,\leq\, C_*\,\qbar^n \norm{w}_\Lrho\,, \qquad
   n\in\N\,,
\ee
for some constant $C_*>0$ depending only on $\alpha$, $q$, and $\qbar$.
\end{proposition}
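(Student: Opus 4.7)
The plan is to exploit the recursion $W_{n+1} = w \ast W_n$ together with the sharper two-term bound from the proof of Proposition~\ref{Prop:Banachalgebra}, namely inequality~\req{Wp-recX}, before it is symmetrized by setting $\eps = 1/2$. Unlike the symmetric bound~\req{Banachalgebra}, the asymmetric bound leaves a free parameter $\eps \in (0,1)$ which can be tuned so that the growth constant becomes $\qbar$ instead of the much larger $\crho 2^{\alpha+1}\norm{w}_{L^1}$.

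Concretely, I would proceed as follows. First, apply~\req{Wp-recX} with $w' = W_n$ and use the $L^1$ estimate~\req{Wp-L1} to obtain
\begin{equation*}
   \norm{W_{n+1}}_{L_\varrho^\infty(\R^3)}
   \,\leq\, \frac{q}{(1-\eps)^\alpha}\,\norm{W_n}_{L_\varrho^\infty(\R^3)}
   \,+\, \frac{q^n}{\eps^\alpha}\,\norm{w}_{L_\varrho^\infty(\R^3)}\,,
\end{equation*}
valid for every $\eps\in(0,1)$. Second, fix $\eps = \eps_* \in (0,1)$ so that $q/(1-\eps_*)^\alpha = \qbar$, i.e.\ $\eps_* = 1 - (q/\qbar)^{1/\alpha}$; this is admissible because $q<\qbar<1$, and $\eps_*$ depends only on $\alpha,q,\qbar$.

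Setting $a_n = \norm{W_n}_{L_\varrho^\infty(\R^3)}/\norm{w}_{L_\varrho^\infty(\R^3)}$, the recursion becomes
\begin{equation*}
   a_{n+1} \,\leq\, \qbar\, a_n \,+\, \frac{q^n}{\eps_*^\alpha}\,,
   \qquad a_1 = 1\,.
\end{equation*}
Third, unroll this linear recursion:
\begin{equation*}
   a_{n+1} \,\leq\, \qbar^n \,+\, \frac{1}{\eps_*^\alpha}\sum_{k=1}^{n} \qbar^{n-k}q^k
   \,=\, \qbar^n\biggl(1 + \frac{1}{\eps_*^\alpha}\sum_{k=1}^{n} (q/\qbar)^k\biggr)
   \,\leq\, \qbar^n\biggl(1 + \frac{1}{\eps_*^\alpha}\cdot\frac{q/\qbar}{1-q/\qbar}\biggr)\,.
\end{equation*}
This furnishes the required bound~\req{Wp-bound} with a constant $C_*$ depending only on $\alpha$, $q$, and $\qbar$ (absorbing the small initial cases, in particular $n=1$, into $C_*$ by, say, dividing by $\qbar$).

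I do not see any significant obstacle: the main conceptual step is realising that the geometric decay rate $\qbar$ can be extracted directly from the asymmetric estimate~\req{Wp-recX} by \emph{not} choosing $\eps=1/2$, and absorbing the arising $\eps_*^{-\alpha}$ factor into the constant. The summation step needs $q<\qbar$, which is precisely the hypothesis; the loss compared to the ``optimal'' rate $q^n$ is the price paid for pointwise (weighted) control rather than merely $L^1$ control.
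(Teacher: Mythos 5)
Your proof is correct and follows essentially the same route as the paper: both rest on applying the asymmetric bound \req{Wp-recX} with $\eps_*=1-(q/\qbar)^{1/\alpha}$ so that the $L_\varrho^\infty$ coefficient becomes exactly $\qbar$, then controlling the remaining $q^n/\eps_*^\alpha$ term. The paper packages this as an induction with the tight hypothesis~\req{Wp-bound-prime}, whereas you unroll the linear recursion and sum the geometric series; these are the same computation in different clothes.
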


\begin{proof}
We are going to prove by induction the inequality
\be{Wp-bound-prime}
   \norm{W_n}_{L_\varrho^\infty(\R^3)}
   \,\leq\, \frac{1}{\eps^\alpha}\,\norm{w}_{L_\varrho^\infty(\R^3)}\,
            \frac{1-(q/\qbar)^n}{1-q/\qbar}\,\qbar^{n-1}\,,
\ee
where we let
\be{eps-alpha}
   \eps \,=\, 1-(q/\qbar)^{1/\alpha}\,,
\ee
which is a positive number; this
readily implies \req{Wp-bound}.
The induction base $n=1$ of \req{Wp-bound-prime} is obviously correct
because $\eps<1$ according to \req{eps-alpha}. For the induction step 
from $n$ to $n+1$, $n\geq 1$, we apply inequality~\req{Wp-recX} from the 
proof of Proposition~\ref{Prop:Banachalgebra} with $w'=W_n$ and $\eps$
of \req{eps-alpha} to obtain
\bdm
   \norm{W_{n+1}}_{L_\varrho^\infty(\R^3)}
   \,\leq\, \qbar\,\norm{W_n}_{L_\varrho^\infty(\R^3)}
            \,+\, \frac{1}{\eps^\alpha}\,
                  \norm{w}_{L_\varrho^\infty(\R^3)}\,\norm{W_n}_{L^1(\R^3)}\,.
\edm
Inserting \req{Wp-L1}, \req{L1normf}, and the induction hypothesis 
\req{Wp-bound-prime} this yields
\bdm
   \norm{W_{n+1}}_{L_\varrho^\infty(\R^3)}
   \,\leq\, \frac{1}{\eps^\alpha}\,\norm{w}_{L_\varrho^\infty(\R^3)}
            \Bigl(
               \frac{1-(q/\qbar)^n}{1-q/\qbar} \,+\, (q/\qbar)^n
            \Bigr)\qbar^n\,,
\edm
which coincides with the bound \req{Wp-bound-prime} for the norm of $W_{n+1}$.
\end{proof}

\begin{corollary}
\label{Cor:Wp-bound}
Under the assumptions of Proposition~\ref{Prop:Wp-bound} the infinite
series
\be{Qell}
   \Ws \,=\, \sum_{n=1}^\infty W_n
\ee
converges in $L_\varrho^\infty(\R^3)$.
\end{corollary}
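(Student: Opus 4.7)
The plan is to invoke the standard fact that absolute convergence of a series implies convergence in any Banach space, and then simply verify absolute convergence using the geometric bound provided by Proposition~\ref{Prop:Wp-bound}.

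First I would note that $L_\varrho^\infty(\R^3)$ is a Banach space (it is a weighted $L^\infty$ space, and completeness is classical), so it suffices to show that the partial sums of \req{Qell} form a Cauchy sequence, and for this it is enough to show that
\bdm
   \sum_{n=1}^\infty \norm{W_n}_{L_\varrho^\infty(\R^3)} \,<\, \infty\,.
\edm
Applying the estimate \req{Wp-bound} of Proposition~\ref{Prop:Wp-bound} termwise, this sum is bounded by
\bdm
   C_* \norm{w}_\Lrho \sum_{n=1}^\infty \qbar^n
   \,=\, \frac{C_*\,\qbar}{1-\qbar}\,\norm{w}_\Lrho\,,
\edm
which is finite because $\qbar<1$ was chosen in $(q,1)$. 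Hence the series converges absolutely and therefore in the norm of $\Lrho$.

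There is essentially no obstacle here: the only substantive ingredient is Proposition~\ref{Prop:Wp-bound}, which has already been established, and the rest is the elementary Weierstrass majorant test in a Banach space. Consequently I would keep the proof to just a few lines.
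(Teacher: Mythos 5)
Your proof is correct and is exactly the argument the paper implicitly intends (the corollary is stated without separate proof precisely because it is the Weierstrass $M$-test applied to the geometric bound \req{Wp-bound}). Nothing to add.
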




\section{The Ursell function}
\label{Sec:Ursell}
The Ursell function (relative to the origin) of our grand canonical ensemble
with pair potential $u\in\U$
(see Section~\ref{Sec:Grandcanonical}) is defined to be
\be{Ursell}
   \omega_\Lambda(R)
   \,=\, \rho_\Lambda^{(2)}(R,0)-\rho_\Lambda^{(1)}(R)\rho_\Lambda^{(1)}(0)\,,
   \qquad R\in\Lambda\,.
\ee
For $z\in\Z$ the Ursell function can be expanded
into an absolutely convergent power series 
\be{an}
   \omega_\Lambda(R) \,=\, \sum_{N=2}^\infty a_{N,\Lambda}(R)\thsp z^N,
\ee
the coefficients of which depend on $u$ via the Mayer function $f$ defined
in \req{Mayer}. They can be represented in the form
\begin{subequations}
\label{eq:an-graph}
\be{an-graph-N}
   a_{N,\Lambda}(R) \,=\, \frac{1}{(N-2)!} 
   \int_{\Lambda^{N-2}}
      \sum_{\gC\in\gCC_N}\prod_{(i,j)\in\gC}\!\!\!f(R_i-R_j)\thsp
   \dRR_{2,N}\,,
\ee
cf.~Stell~\cite{Stel64},
where $R_1=R$ and $R_2=0$, 
$\gCC_N$ is the set of connected graphs with $N$ vertices,
labeled $1,\dots,N$,
and the product in \req{an-graph-N} runs over all bonds in $\gC$: 
the notation $(i,j)$ refers to a bond connecting vertices 
$i$ and $j$, where we use the convention that $i<j$ for $(i,j)\in\gC$.
For $N=2$ the representation~\req{an-graph-N} is to be read as
\be{an-graph-2}
   a_{2,\Lambda}(R) \,=\, f(R)\,.
\ee
\end{subequations}

For our reference potential $u$ we consider a perturbation $v\in\V$
with $\norm{v}_\V\leq\delta_0(u)$, compare Section~\ref{Sec:Setting}.
For the \emph{complex} pair potential $\utilde=u+\zeta v$ with 
$\zeta\in\C$, $|\zeta|\leq 1$, we define $\widetilde{U}_N$, $V_N$, and $|V|_N$
as in \req{B}, replacing $u$ by $\utilde$, $v$ and $|v|$, respectively,
on the right-hand side.
Using this notation it follows from \req{B} that $\utilde$ satisfies the 
stability bound
\be{Ueltschi-stab-bound}
\begin{aligned}
   \Real\bigl(\widetilde{U}_N(\RR_N)\bigr)
   &\,=\, U_N(\RR_N) \,+\, \Real(\zeta)\, V_N(\RR_N)\\[1ex]
   &\,\geq\, U_N(\RR_N) \,-\, |V|_N(\RR_N) \,\geq\, -NB
\end{aligned}
\ee
for every coordinate vector $\RR_N\in(\R^3)^N$, because $u-|v|\in\U$ by the
definition of $\delta_0$. Furthermore, the associated (complex) Mayer function
\bdm
   \ftilde(R) \,=\, e^{-\beta\utilde(|R|)}-1\,, \qquad R\in\R^3\,, 
\edm
satisfies 
\bdm
   \bigl|\ftilde(R) - f(R)\bigr| 
   \,\leq\, \beta e^{-\beta (u-|v|)(|R|)} \bigl|v(|R|)\bigr| 
   \,\leq\, \left\{ \!\!\!
            \begin{array}{rl}
               \dfrac{1}{e(1-\delta_0)} \norm{v}_\V\,, & 
               |R| < r_0\,,\\[2.5ex]
               \dfrac{\beta e^{2\beta B}}{\varrho(|R|)}\, \norm{v}_\V\,, & 
               |R|\geq r_0\,.
            \end{array}
            \right.
\edm
It follows that
\be{ftildew}
   |\ftilde(R)| \,\leq\, \cbeta w(R), \qquad R\in\R^3\,,
\ee
where $\cbeta$ has been introduced in \req{cbeta}, and
\be{ustardef}
   w(R) \,=\, \frac{1}{\cbeta}\,|f(R)| 
              \,+\, \Cbeta\,\frac{\delta_0}{\varrho(|R|)}
\ee
for some suitably chosen constant $\Cbeta>0$; 
note that $w$ is a positive function. 
Reducing the size of $\delta_0$, when necessary, we can make sure that
\be{ustar}
   q \,:=\, \int_{\R^3} w(R)\dR \,<\, 1
\ee
by virtue of \req{cbeta}.
We fix $\delta_0=\delta_0(u)$ accordingly for the remainder of this paper.
Note that \req{ftildew} holds uniformly 
for all $\utilde=u+\zeta v$ with $\norm{v}_\V\leq\delta_0$
and $\zeta\in\C$, $|\zeta|\leq 1$.

As we have already mentioned in Section~\ref{Sec:Inequality} the 
Mayer $f$-function belongs to the Banach space $\Lrho$ introduced 
in \req{Linftyrho},
hence the function $w$ of \req{ustardef} satisfies the
assumptions of Proposition~\ref{Prop:Wp-bound} and 
Corollary~\ref{Cor:Wp-bound}: As before we denote by $W_n,\Ws\in\Lrho$ the 
corresponding autoconvolutions~\req{Uell} and their infinite series~\req{Qell},
respectively. Note that
\be{monotonicity}
   0 \,<\, W_n(R) \,\leq\, \Ws(R)
\ee
for every $R\in\R^3$ and $n\in\N$, because $w$ is a positive function.

Now we fix a perturbation $v_0\in\V$ with $\norm{v_0}_\V\leq\delta_0$, 
an activity parameter $z$ satisfying \req{Iz},
and coordinates $R_1=R\in\Lambda$ and $R_2=0$. 
Let $N\geq 2$ and 
$\RR_{2,N}\in\Lambda^{N-2}$ be further $N-2$ points in $\Lambda$.
We use them to define entire functions
\bdm
   f_{ij}(\zeta)
   \,=\, e^{-\beta \thsp\bigl(\!\thsp u(|R_i-R_j|)+\zeta v_0(|R_i-R_j|)\bigr)}-1\,, \qquad
   1\leq i<j\leq N\,,
\edm
of $\zeta\in\C$, and we observe that
\bdm
   \varphi_N(\zeta)
   \,=\, \sum_{\gC\in\gCC_N}\prod_{(i,j)\in\gC}\!\!f_{ij}(\zeta)
\edm
is also an entire function, because the number of connected graphs 
with $N$ vertices is finite. For $0<\eps\leq 1/2$ we can therefore 
apply Cauchy's integral formula to deduce that
\begin{subequations}
\label{eq:varphi-Delta}
\be{varphi-difference}
   \bigl|\varphi_N(\eps)-\varphi_N(0)\bigr|
   \,=\, \Bigl|\frac{\eps}{2\pi\rmi}\int_{|\zeta|=1} 
               \frac{\varphi_N(\zeta)}{\zeta(\zeta-\eps)}\dzeta\Bigr|
   \,\leq\, \frac{\eps}{\pi}
            \int_0^{2\pi} \bigl|\varphi_N(e^{\rmi\theta})\bigr|\dtheta
   \phantom{xx}
\ee
and
\be{varphi-remainder}
\begin{aligned}
   &\bigl|\varphi_N(\eps)-\varphi_N(0)-\eps\varphi_N'(0)\bigr|\\[1ex]
   &\phantom{\bigl|\varphi_N(\eps)-\varphi_N(0)\bigr|}
    \,=\, \Bigl|\frac{\eps^2}{2\pi\rmi}\int_{|\zeta|=1} 
                \frac{\varphi_N(\zeta)}{\zeta^2(\zeta-\eps)}\dzeta\Bigr|
    \,\leq\, \frac{\eps^2}{\pi}
            \int_0^{2\pi} \bigl|\varphi_N(e^{\rmi\theta})\bigr|\dtheta\,.
\end{aligned}
\ee
\end{subequations}
The absolute value of $\varphi_N$ can be 
estimated by means of a tree-graph inequality
\be{tree-graph}
   |\varphi_N(\zeta)|
   \,\leq\, e^{N\beta B}  
            \sum_{\gT\in\gTT_N}\prod_{(i,j)\in\gT}\!
                \bigl|f_{ij}(\zeta)\bigr|\,, \qquad
   |\zeta|\leq 1\,,
\ee
where $\gTT_N$ is the set of trees with $N$ vertices. 
Note that this inequality, which can be found in Ueltschi~\cite{Uelt17}, 
makes use of the stability bound~\req{Ueltschi-stab-bound}, 
which in turn requires $|\zeta|\leq 1$.

By virtue of \req{ftildew} we have the inequality
\be{int-tmp}
   \int_{\Lambda^{N-2}} \prod_{(i,j)\in\gT}\!\bigl|f_{ij}(\zeta)\bigr| \dRR_{2,N}
   \,\leq\, \int_{\R^{N-2}} \prod_{(i,j)\in\gT}\!\! c_\beta w(R_i-R_j) \dRR_{2,N}
\ee
for any fixed tree $\gT\in\gTT_N$. Such a tree consists of (i) a ``backbone'' 
with, say, $n$ bonds and $n-1$ inner vertices, where $1\leq n \leq N-1$, 
which connects the vertices $1$ and $2$, and (ii) $n+1$ subtrees rooted 
at all vertices of this backbone. 
One can first integrate~\req{int-tmp} over all $N-n-1$ vertices 
of these subtrees besides 
their roots, with each of these integrals being bounded by $\cbeta$ according 
to \req{ustar}; 
integrating over the inner vertices of the backbone thereafter
constitutes an $n$-fold autoconvolution of $\cbeta w$, i.e.,
\bdm
   \int_{\Lambda^{N-2}} \!\prod_{(i,j)\in\gT}\!\bigl|f_{ij}(\zeta)\bigr| \dRR_{2,N}
   \,\leq\, \cbeta^{N-1} W_n(R_1-R_2)
   \,=\, \cbeta^{N-1} W_n(R) \,\leq\, \cbeta^{N-1} \Ws(R)\,,
\edm
where we have used \req{monotonicity} for the final inequality.
Note that this estimate is independent of the particular form of the tree $\gT$.
Therefore, making use of Cayley's result that $\gTT_N$ consists of exactly 
$N^{N-2}$ different trees, we deduce from \req{tree-graph} the
upper bound
\be{BehnkeSommer}
   \int_{\Lambda^{N-2}} \bigl|\varphi_N(\zeta)\bigr| \dRR_{2,N}
   \,\leq\, e^{N\beta B} N^{N-2} \cbeta^{N-1} \Ws(R)\,, \qquad |\zeta|\leq 1\,.
\ee

Using the functions $\varphi_N$, $N\geq 2$, we can extend \req{an-graph} 
and \req{an} -- given the fixed coordinate $R\in\Lambda$ --
to scalar functions of a complex variable $\zeta$, namely
\bdm
   \ahat_{N,\Lambda}(\zeta) \,=\, \frac{1}{(N-2)!} 
                            \int_{\Lambda^{N-2}} \varphi_N(\zeta)\thsp\dRR_{2,N}\,,
\edm
and
\be{an-zeta}
   \omegahat_\Lambda(\zeta)
   \,=\, \sum_{N=2}^\infty \ahat_{N,\Lambda}(\zeta)\thsp z^N\,.
\ee
For $\zeta=0$ we recover the original definitions~\req{an-graph} and \req{an}.
Since $\varphi_N$ is absolutely integrable with respect to 
$\RR_{2,N}\in\Lambda^{N-2}$, cf.~\req{BehnkeSommer}, 
and the integral is uniformly bounded for $|\zeta|\leq 1$,
it follows that $\ahat_{N,\Lambda}$ is also complex analytic
for $|\zeta|\leq 1$. 
Furthermore, from \req{varphi-remainder} and \req{BehnkeSommer}
it follows that
\bdm
\begin{aligned}
   \bigl|
      \ahat_{N,\Lambda}(\eps)-\ahat_{N,\Lambda}(0)-\eps \ahat_{N,\Lambda}'(0)
   \bigr|
   &\,\leq\, 2\eps^2\,\frac{N^{N-2}}{(N-2)!} \,e^{N\beta B} \cbeta^{N-1} \Ws(R)
   \\[1ex]
   &\,\leq\, \frac{2\eps^2}{\cbeta}\,(\cbeta e^{\beta B+1})^N\Ws(R)
\end{aligned}
\edm
for $0<\eps\leq 1/2$.
%

Since the infinite series~\req{an-zeta} converges uniformly for
$\zeta\in\C$, $|\zeta|\leq 1$
(for the same fixed parameters $R$, $z$, and the same perturbation 
$v_0\in\V$), the complex extension $\omegahat$ of the Ursell function
is also an analytic function of $\zeta$ in a neighborhood of the unit disk
with
\be{Ursell-remainder-tmp}
   \bigl|
      \omegahat_\Lambda(\eps)-\omegahat_{\Lambda}(0)-\eps\omegahat_{\Lambda}'(0)
   \bigr|
   \,\leq\, \eps^2z^2\,\frac{2\cbeta e^{2(\beta B+1)}}{1-z\cbeta e^{\beta B+1}}\,
            \Ws(R)\,, \qquad
   0 < \eps \leq 1/2\,.
\ee
We already know from \cite{Hank16a} that the original 
Ursell function~\req{Ursell} of $R\in\Lambda$ has a derivative 
$\partial\omega_\Lambda\in \L(\V,L^\infty(\Lambda))$ 
with respect to $u$. Accordingly, \req{Ursell-remainder-tmp} implies that
when choosing $v_0\in\V$, $z$ as in \req{Iz}, and $R\in\Lambda$ as above then
\bdm
   \bigl((\partial \omega_\Lambda)v_0\bigr)(R) \,=\, \omegahat_\Lambda'(0)\,.
\edm
On the other hand, \req{Ursell-remainder-tmp} is valid for every $z$ as in
\req{Iz}, $R\in\Lambda$, and independent of the
particular choice of $v_0\in\V$ with $\norm{v_0}_\V\leq\delta_0$. 
Therefore, denoting by $\omega_\Lambda$ and $\omtilde_\Lambda$ the Ursell 
functions~\req{Ursell} associated with the reference potential $u$ and 
\emph{any} perturbed potential $\utilde=u+v$ with $\norm{v}_\V\leq\delta_0/2$, 
we can rewrite \req{Ursell-remainder-tmp} for
\bdm
   v_0 \,=\, \delta_0 v/\norm{v}_\V \qquad \text{and} \qquad
   \eps\,=\, \norm{v}_\V/\delta_0
\edm
as
\be{Ursell-remainder-Lambda}
   \bigl|
      \bigl(\omtilde_\Lambda - \omega_\Lambda - (\partial\omega_\Lambda)v\bigr) 
      (R)\bigr|
   \,\leq\, z^2\,\frac{2}{\delta_0^2}\,
            \frac{\cbeta e^{2(\beta B+1)}}{1-z\cbeta e^{\beta B+1}}\,
            \norm{v}_\V^2 \Ws(R)\,, 
\ee
valid for every $R\in\Lambda$. 

Starting from \req{an-zeta} with $\zeta=0$ or from \req{varphi-difference},
respectively, the same line of argument leads to the corresponding estimates
\begin{subequations}
\begin{align}
\label{eq:Ursell-bound}
   \bigl|\omega_\Lambda(R)\bigr|
   &\,\leq\, z^2\,\frac{\cbeta e^{2(\beta B+1)}}{1-z\cbeta e^{\beta B+1}}\,\Ws(R)
   \\[1ex]
\intertext{and}
\label{eq:Ursell-Delta}
   \bigl|\omtilde_\Lambda(R) - \omega_\Lambda(R)\bigr|
   &\,\leq\, z^2\,\frac{2}{\delta_0}\,
             \frac{\cbeta e^{2(\beta B+1)}}{1-z\cbeta e^{\beta B+1}}\,
             \norm{v}_\V \Ws(R)\,,
\end{align}
\end{subequations}
valid for every $R\in\Lambda$ and every $v\in\V$ with
$\norm{v}_\V\leq\delta_0/2$.

Before we proceed we note that
from \req{Ursell} and the discussion in Section~\ref{Sec:Grandcanonical} 
one can readily conclude that
\be{Ursell-limit}
   \omega_\Lambda(R) \,\longrightarrow\, \omega(R)
   \,:=\, \rho^{(2)}(R,0)-\rho_0^2\,, \qquad
   |\Lambda|\to\infty\,,
\ee
uniformly on every compact subset of $R\in\R^3$ and
for all activities $z$ in a compact subinterval of \req{Iz}.
We show next that this thermodynamical limit of the Ursell function 
belongs to the Banach space $L_\varrho^\infty(\R^3)$.

\begin{proposition}
\label{Prop:Groeneveld}
Let $u\in\U$ and let $z$ satisfy \req{Iz}, where $\cbeta$ and $B$ are given by
\req{cbeta} and \req{B}, respectively. 
Then the thermodynamical limit of the Ursell function 
belongs to $L_\varrho^\infty(\R^3)$, i.e., there exists $c_\omega>0$ such that
\be{Groeneveld}
   \bigl|\omega(R)\bigr| \,\leq\, c_\omega (1+|R|^2)^{-\alpha/2}
\ee
for every $R\in\R^3$.
\end{proposition}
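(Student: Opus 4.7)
The plan is to obtain the desired pointwise bound on $\omega$ by passing to the thermodynamical limit in the pre-limit estimate~\req{Ursell-bound}, which has already been established for the box Ursell function $\omega_\Lambda$. The key observation is that the right-hand side of \req{Ursell-bound} is independent of $\Lambda$ and, thanks to Corollary~\ref{Cor:Wp-bound}, has the correct decay in $|R|$.

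More concretely, I would first note that the arguments of Section~\ref{Sec:Ursell} leading to \req{Ursell-bound} apply to the reference potential $u$ itself (corresponding to the trivial perturbation $v_0=0$); thus the function $w$ in \req{ustardef} dominates $|f|/\cbeta$ pointwise and satisfies $\int w\dR<1$, so that Proposition~\ref{Prop:Wp-bound} and Corollary~\ref{Cor:Wp-bound} yield $\Ws\in\Lrho$. In particular, there is a constant $M>0$ such that
\[
   \Ws(R) \,\leq\, M (1+|R|^2)^{-\alpha/2}, \qquad R\in\R^3.
\]
Combining this with \req{Ursell-bound} gives, for every sufficiently large box $\Lambda$ containing $R$,
\[
   \bigl|\omega_\Lambda(R)\bigr|
   \,\leq\, z^2\,\frac{\cbeta e^{2(\beta B+1)}}{1-z\cbeta e^{\beta B+1}}\, M (1+|R|^2)^{-\alpha/2},
\]
with a constant that is uniform in $\Lambda$.

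Next, by~\req{Ursell-limit}, $\omega_\Lambda(R)\to\omega(R)$ as $|\Lambda|\to\infty$ (uniformly on compacts, hence certainly pointwise). Passing to the limit in the inequality above preserves the bound, so setting
\[
   c_\omega \,=\, z^2 M\,\frac{\cbeta e^{2(\beta B+1)}}{1-z\cbeta e^{\beta B+1}}
\]
yields \req{Groeneveld} and proves $\omega\in\Lrho$.

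The only nontrivial point in this plan is verifying that Proposition~\ref{Prop:Wp-bound} is genuinely applicable, i.e.\ that one can make $q=\int w<1$ with a function $w\in\Lrho$ dominating $|f|/\cbeta$. This is precisely what the fixed choice of $\delta_0=\delta_0(u)$ after~\req{ustar} guarantees; no further reduction of $\delta_0$ is needed here because the Ursell bound is to be established only for the unperturbed ensemble. Everything else is a direct chain of applications of previously established estimates.
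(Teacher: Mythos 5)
Your proposal is correct and follows essentially the same route as the paper's own proof: bound $\Ws$ via Proposition~\ref{Prop:Wp-bound}/Corollary~\ref{Cor:Wp-bound}, insert this into the pre-limit estimate~\req{Ursell-bound}, and pass to the thermodynamical limit using~\req{Ursell-limit}. Your aside on taking $v_0=0$ and the remark that no further reduction of $\delta_0$ is required are accurate clarifications but do not change the argument; incidentally, you also use the correct exponent $-\alpha/2$, whereas the paper's displayed intermediate bound~\req{Ws-konkret} has an apparent typo with $-\alpha$.
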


\begin{proof}
We have mentioned already that
the Mayer function $f$ 
belongs 
to $L_\varrho^\infty(\R^3)$ because of \req{LJtype}
so that the function $w$ defined in \req{ustardef} 
satisfies the assumptions of Proposition~\ref{Prop:Wp-bound} by virtue of 
\req{ustar}. Accordingly, we deduce from Corollary~\ref{Cor:Wp-bound} 
the existence of some constant $C>0$ such that
\be{Ws-konkret}
   \Ws(R) \,\leq\, C (1+|R|^2)^{-\alpha} \qquad \text{for} \quad
   R\in\R^3\,.
\ee
For any fixed activity $z$ from the interval \req{Iz} we therefore obtain 
from \req{Ursell-bound} the inequality
\bdm
   \bigl|\omega_\Lambda(R)\bigr| \,\leq\, c_\omega(1+|R|^2)^{-\alpha}
\edm
for some $c_\omega>0$ and all $R\in\Lambda$. 
Turning to the thermodynamical limit $|\Lambda|\to\infty$
the assertion thus follows from \req{Ursell-limit}.
\end{proof}

Some comments on Proposition~\ref{Prop:Groeneveld} are in order.

The estimate \req{Groeneveld} can be found in a paper by 
Groeneveld~\cite{Groe67} with similar assumptions on the 
pair potential\footnote{The notation in \cite{Groe67} concerning the 
assumptions on $u$ and the corresponding hypothesis
is not fully clear, though.}, 
but it appears that he only published a proof for nonnegative potentials
(in \cite{Groe67b}).
On the other hand, Ruelle included in his book~\cite{Ruel69}
a proof of the weaker statement that $\omega\in L^1(\R^3)$; 
see also \cite{Ruel64}. 

A common way of estimating the decay of the Ursell function consists
in rewriting the Mayer function in \req{tree-graph} as
\bdm
   f(R) \,=\, \bigl(f(R)e^{a(R)}\bigr) e^{-a(R)}
\edm
in such a way that $a$ satisfies a triangle inequality,
and $fe^a$ is bounded and absolutely integrable.
In this case the integral over the backbone considered above can be estimated
by $e^{-a(R)}$ times an autoconvolution of $fe^a$, and this former factor
$e^{-a(R)}$ provides an estimate for the rate of decay. 
In our case this approach could be realized with
\bdm
   e^{a(R)} \,=\, |R|^{\alpha'-3} \qquad 
   \text{for any \ $3<\alpha'<\alpha$}\,,
\edm
but the resulting bound for the Ursell function is evidently suboptimal.

The bound \req{Groeneveld}, on the other hand, is optimal up to 
multiplicative constants, as follows from the cluster expansion~\req{an}, 
which gives
\bdm
   \omega(R) \,=\, z^2 f(R) \,+\, O(z^3)\,, \qquad z\to 0\,,
\edm
according to \req{an-graph-2}.

Now we finalize our investigation of the differentiability of the Ursell
function.

\begin{theorem}
\label{Thm:Ursell-remainder}
Assume that $z$ satisfies \req{Iz}. Then the thermodynamical limit of 
the Ursell function, considered a function of $u\in\U$, has a 
Fr\'echet derivative $\partial\omega\in\L(\V,L^\infty_\varrho(\R^3))$.
More precisely, if $\omega$ and $\omtilde$ denote the thermodynamical limits 
of the Ursell functions corresponding to $u$ and $\utilde=u+v$, respectively,
where $\norm{v}_\V\leq\delta_0(u)/2$,
then there exists $C_\omega=C_\omega(u,z)$, such that
\begin{subequations}
\label{eq:Ursell-limit-Delta}
\begin{align}
\label{eq:Ursell-difference}
   \norm{\omtilde - \omega}_{L^\infty_\varrho(\R^3)}
   &\,\leq\, C_\omega \norm{v}_\V\,,\\[1ex]
\label{eq:Ursell-remainder}
   \norm{\omtilde - \omega - (\partial\omega)v}_{L^\infty_\varrho(\R^3)}
   &\,\leq\, C_\omega \norm{v}_\V^2\,.
\end{align}
\end{subequations}
\end{theorem}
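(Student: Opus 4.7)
The plan is to pass the pointwise finite-volume estimates \req{Ursell-Delta} and \req{Ursell-remainder-Lambda} to the thermodynamical limit $|\Lambda|\to\infty$ and then convert the resulting bounds, whose right-hand sides carry the factor $\Ws(R)$, into $L^\infty_\varrho(\R^3)$-norm estimates. The key observation is that \req{Ws-konkret} says $\varrho(|R|)\Ws(R) \leq C$, i.e., $\Ws \in L^\infty_\varrho(\R^3)$ with some norm $c_*$, so that every pointwise estimate of the form $|\phi(R)|\leq M\Ws(R)$ translates directly into $\norm{\phi}_{L^\infty_\varrho(\R^3)} \leq c_* M$.

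To realize the limit I would first pass the complex-extension argument of Section~\ref{Sec:Ursell} to the thermodynamical limit. For a fixed $R\in\R^3$ and a fixed perturbation direction $v_0 \in \V$ with $\norm{v_0}_\V \leq \delta_0$, the integrand $\varphi_N(\zeta)$ is independent of $\Lambda$ and depends on $\RR_{2,N}$ only through the Mayer factors $f_{ij}(\zeta)$. Dominated convergence, with dominating function supplied by the tree-graph bound together with \req{ftildew}--\req{ustar}, gives $\ahat_{N,\Lambda}(\zeta) \to \ahat_N(\zeta) := \tfrac{1}{(N-2)!} \int_{(\R^3)^{N-2}} \varphi_N(\zeta)\,\rmd\RR_{2,N}$ for every $|\zeta|\leq 1$. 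Since \req{BehnkeSommer} bounds the tail of the series \req{an-zeta} uniformly in $\Lambda$ and in $\zeta$ on the closed unit disk, and since $z$ satisfies \req{Iz}, the limit series $\omegahat(\zeta) := \sum_{N\geq 2} \ahat_N(\zeta)\,z^N$ converges uniformly on $|\zeta|\leq 1$, is analytic there, and satisfies $\omegahat'_\Lambda(0) \to \omegahat'(0)$. I would then define $(\partial\omega v)(R)$ via $\omegahat'(0)$, taking the direction in the complex extension to be $v$ itself in place of the normalization $v_0 = \delta_0 v/\norm{v}_\V$ of the excerpt; linearity in $v$ is inherited from the fact that $\omegahat'(0)$ depends linearly on the perturbation direction.

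Because the Cauchy integral bounds \req{varphi-Delta} combined with \req{BehnkeSommer} are uniform in $\Lambda$, passing them to the limit reproduces the pointwise analogues of \req{Ursell-Delta} and \req{Ursell-remainder-Lambda} for the limiting quantities $\omega$, $\omtilde$, and $\partial\omega$, now valid on all of $\R^3$. Multiplying by $\varrho(|R|)$ and using $\norm{\Ws}_{L^\infty_\varrho(\R^3)}\leq c_*$ then yields both \req{Ursell-difference} and \req{Ursell-remainder} with a common constant $C_\omega$ absorbing $c_*$ and the prefactors in \req{Ursell-Delta}--\req{Ursell-remainder-Lambda}; the bound on $\omtilde-\omega-\partial\omega v$ in particular shows that the candidate operator $\partial\omega$ is genuinely the Fr\'echet derivative of $\omega$ at $u$. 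The main difficulty lies in the second paragraph: one has to justify term-by-term convergence of the complex cluster expansion \req{an-zeta} and uniform analyticity of $\omegahat_\Lambda$ on a neighborhood of the closed unit disk. Once these are secured, everything reduces to passing the inequalities of Section~\ref{Sec:Ursell} to the limit.
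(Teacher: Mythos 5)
Your proposal follows essentially the same route as the paper's (very terse) proof: pass the pointwise finite-volume estimates \req{Ursell-Delta} and \req{Ursell-remainder-Lambda} to the thermodynamical limit and then convert them into $L^\infty_\varrho$-norm bounds via \req{Ws-konkret}. The only difference in emphasis is that you re-derive the existence of the limiting derivative $\partial\omega$ from scratch by passing the complex-analytic cluster-expansion argument (Cauchy integrals, tree-graph bound, dominated convergence term by term) to the limit, whereas the paper's proof leans on the prior result from \cite{Hank16a} that the thermodynamical-limit distribution functions, and hence the limiting Ursell function, are already Fr\'echet differentiable in $\L(\V,L^\infty(\R^3))$, so that Theorem~\ref{Thm:Ursell-remainder} is merely an upgrade of the target space from $L^\infty$ to $L^\infty_\varrho$. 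Your self-contained reconstruction is a bit more work but fills in the step the paper compresses into ``turning to the thermodynamical limit''; in particular, note that you cannot simply deduce convergence of $(\partial\omega_\Lambda)v$ from \req{Ursell-remainder-Lambda} and \req{Ursell-limit} alone (the remainder estimate only localizes $(\partial\omega_\Lambda)v$ near the convergent difference $\omtilde_\Lambda-\omega_\Lambda$, it does not make it a Cauchy sequence), so an independent argument --- either your complex-analytic one or the citation of \cite{Hank16a} --- is genuinely necessary. One small bookkeeping caution: if you take the perturbation direction in the complex extension to be $v$ itself (with $\norm{v}_\V\le\delta_0/2$), the actual perturbed potential sits at $\zeta=1$, which is outside the range $0<\eps\le 1/2$ where the Cauchy remainder bound \req{varphi-remainder} is stated; that is fine for \emph{defining} $\omegahat'(0)$, but for the quantitative remainder estimate you must, as the paper does, use the rescaled direction $v_0=\delta_0 v/\norm{v}_\V$ and $\eps=\norm{v}_\V/\delta_0$ (or $v_0=2v$ with $\eps=1/2$) and then invoke linearity.
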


\begin{proof}
Let $u\in\U$ and $v\in\V$ satisfy $\norm{v}_\V\leq\delta_0(u)/2$.
As in \req{Ursell-remainder-Lambda} we write $\omtilde_\Lambda$ for the
Ursell function associated with $\utilde=u+v$ and 
$(\partial\omega_\Lambda)v$ for the derivative of the Ursell function at 
$u$ in direction $v$.
Then \req{Ursell-limit-Delta} readily follows from
\req{Ursell-remainder-Lambda} and \req{Ursell-Delta}
by using \req{Ws-konkret} and turning to
the thermodynamical limit $|\Lambda|\to\infty$, compare \req{Ursell-limit}.
\end{proof}

\section{Iterative Boltzmann inversion}
\label{Sec:IBI}
Now we turn to the fixed point operator 
\bdm
   \Phi(u) \,=\, u \,+\, \gamma \log \frac{F(u)}{\gd}
\edm
associated with the IBI method~\req{IBI}, where $\gamma>0$ is a fixed 
parameter. Recall that $F$ is the nonlinear operator~\req{Henderson-eq}
which takes a pair potential $u\in\U$ onto the corresponding radial
distribution function $g$ in \req{g}. This operator is associated with
the corresponding grand canonical ensemble at fixed (inverse) temperature 
$\beta>0$ and
fixed activity $z>0$, where for technical reasons we slightly restrict
the admissible interval of the activity parameter, cf.~\req{zbound} below.
Note that Henderson~\cite{Hend74} as well as Soper~\cite{Sope96}
considered the operator $F$ for a canonical ensemble at fixed temperature 
and fixed (counting) density $\rho_0$;
it is possible to redo the subsequent analysis also for this case
with $\rho_0$ sufficiently small by using
the equivalence of ensembles and treating
the activity as a function of the counting density and the potential,
i.e., $z=z(\rho_0,u)$.

For fixed radial argument $r>0$ it is an immediate consequence of the results 
in \cite{Hank16a} that the scalar function $u\mapsto(\Phi(u))(r)$ 
is differentiable with respect to $u$, 
and the corresponding derivative is given by
\be{Phiprime}
   \Phi'(u)v \,=\, v \,+\, \gamma\,\frac{F'(u)v}{F(u)}\,,
\ee
pointwise for $r>0$.
However, since $F(u)$ -- as a function of $r$ -- 
decays exponentially near $r=0$ it is not at all obvious
whether $\Phi'$ actually is a Fr\'echet derivative in $\L(\V,\V)$.
This will be established in Theorem~\ref{Thm:Phi} below.
To prepare for this theorem we investigate the core region $0<r\leq r_0$
and the remaining interval $r>r_0$ separately.
We start with the core region.

%
%

\begin{lemma}
\label{Lem:kleine_r}
Let $u\in\U$ and
\be{zbound}
   0 \,<\, z \,\leq\, \zo \,<\, \frac{1}{1+e}\frac{1}{\cbeta e^{2\beta B+1}}
\ee
be arbitrarily fixed. Then there exists $C>0$, depending on $u$ and on $\zo$, 
but independent of $r\in(0,r_0]$, such that
\begin{subequations}
\label{eq:DeltaPhi-klein}
\begin{align}
\label{eq:Delta0Phi-klein}
   \Bigl|\bigl(\Phi(\utilde) - \Phi(u)\bigr)(r)\Bigr|
   &\,\leq\, C\, \norm{\utilde-u}_\V u(r)\,, \\
\label{eq:Delta1Phi-klein}
   \Bigl|\bigl(\Phi(\utilde) - \Phi(u) - \Phi'(u)(\utilde-u)\bigr)(r)\Bigr|
   &\,\leq\, C\, \norm{\utilde-u}_\V^2 \,,
\end{align}
\end{subequations}
uniformly for 
$\utilde\in\U$, provided that $\norm{\utilde-u}_\V$ is sufficiently small.
\end{lemma}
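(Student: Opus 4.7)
The plan is to reduce both estimates to properties of the cavity distribution function $y$ rather than working with the radial distribution function $g = F(u)$ directly. Using $g(r)=e^{-\beta u(r)}y(r)$ the fixed-point operator can be rewritten as
\begin{equation*}
   \Phi(u)(r) \,=\, (1-\gamma\beta)\thsp u(r) \,+\, \gamma\log y(r) \,-\, \gamma\log \gd(r)\,,
\end{equation*}
which cleanly separates the singular linear-in-$u$ piece from a logarithm of the bounded, strictly positive cavity function. Proposition~\ref{Prop:rho2-low} supplies precisely the two ingredients needed on $(0,r_0]$: the lower bound~\req{sigmastar} keeps $y$ above a fixed positive constant for $z\leq\zo$, and the Fr\'echet differentiability $\partial y\in\L(\V,L^\infty(\R^+))$ makes $y$ Lipschitz in $u$ uniformly in $r$.

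For \req{Delta0Phi-klein} write $v=\utilde-u$ and subtract to obtain
\begin{equation*}
   \bigl(\Phi(\utilde)-\Phi(u)\bigr)(r)
   \,=\, (1-\gamma\beta)\thsp v(r) \,+\, \gamma\bigl(\log\widetilde{y}(r)-\log y(r)\bigr)\,.
\end{equation*}
The first summand is bounded by $|1-\gamma\beta|\,\norm{v}_\V u(r)$ directly from the definition~\req{Vd} of the $\V$-norm. For the second, boundedness of $y$ away from zero together with the Lipschitz estimate yield $\bigl|\log\widetilde{y}(r)-\log y(r)\bigr|\leq C\norm{v}_\V$ uniformly on $(0,\infty)$; since $u(r)\geq c\thsp r_0^{-\alpha}$ for $r\in(0,r_0]$ by the lower bound in~\req{LJtype}, this constant can be absorbed into a multiple of $\norm{v}_\V u(r)$, which is the required bound.

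For \req{Delta1Phi-klein} the linear part contributes nothing to the remainder, so only the logarithmic term matters. A second order Taylor expansion of $\log$ about $y(r)$ gives
\begin{equation*}
   \log\widetilde{y}(r)-\log y(r)-\frac{\bigl((\partial y)v\bigr)(r)}{y(r)}
   \,=\, \frac{\widetilde{y}(r)-y(r)-\bigl((\partial y)v\bigr)(r)}{y(r)}
         \,-\, \frac{\bigl(\widetilde{y}(r)-y(r)\bigr)^2}{2\xi(r)^2}\,,
\end{equation*}
where $\xi(r)$ lies between $y(r)$ and $\widetilde{y}(r)$ and is therefore bounded below. The quadratic term is harmless because $|\widetilde{y}-y|\leq C\norm{v}_\V$ uniformly in $r$. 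The main obstacle is the first summand, which requires a quantitative second-order Fr\'echet remainder $\norm{\widetilde{y}-y-(\partial y)v}_{L^\infty(\R^+)}\leq C\norm{v}_\V^2$ that goes beyond the bare differentiability asserted in Proposition~\ref{Prop:rho2-low}. I would obtain this by mimicking the contour argument of Section~\ref{Sec:Ursell}: along the complex ray $\utilde_\zeta=u+\zeta v$ (with $|\zeta|\leq 1$ and $\norm{v}_\V$ small enough that $u-|v|\in\U$, so that the stability bound~\req{Ueltschi-stab-bound} applies), the Kirkwood--Salsburg representation~\req{KirkwoodSalsburg} exhibits $\bfsigma$, and hence $y$, as an analytic function of $\zeta$ that is uniformly bounded on the closed unit disk. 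A Cauchy integral formula for the second-order remainder, analogous to~\req{varphi-remainder}, then yields the $\norm{v}_\V^2$ bound uniformly in $r>0$, from which \req{Delta1Phi-klein} follows.
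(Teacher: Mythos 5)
Your proposal is correct and follows essentially the same route as the paper: both rewrite $\Phi$ via $g=e^{-\beta u}y$ so that the singular $(1-\gamma\beta)u$ piece and the bounded $\gamma\log y$ piece separate, then use the strictly positive lower bound on $y$ from Proposition~\ref{Prop:rho2-low} together with a second-order estimate for $\log$. Your explicit observation that one needs a quantitative second-order Fr\'echet remainder $\norm{\widetilde{y}-y-(\partial y)v}_{L^\infty}\leq C\norm{v}_\V^2$ — filled in by the contour argument along $\utilde_\zeta=u+\zeta v$ — is exactly what the paper's terse ``by virtue of Proposition~\ref{Prop:rho2-low}, again'' leaves implicit, since that estimate is inherited from the analytic Kirkwood--Salsburg machinery of~\cite{Hank16a} rather than from bare differentiability.
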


\begin{proof}
Referring to the cavity distribution function $y$ defined in \req{y} we have
\bdm
   F(u) \,=\, g \,=\, e^{-\beta u} y
\edm
from which we deduce the representation 
\be{Fprime}
   F'(u)v \,=\, -\beta e^{-\beta u}vy 
           \,+\, e^{-\beta u}(\partial y)v\,,
\ee
pointwise for $0< r\leq r_0$ and all $v\in\V$;
here, $\partial y$ denotes the Fr\'echet
derivative of $y$ with respect to $u$, compare Proposition~\ref{Prop:rho2-low}.

Let $\sigtilde$ be the cavity distribution function associated with 
$\utilde=u+v$ for some $v\in\V$ sufficiently small. Then we have
\bdmal
   \Phi(\utilde) - \Phi(u)
   &\,=\, v \,+\, \gamma \,\log\frac{F(\utilde)}{F(u)}  
    \,=\, (1-\beta\gamma)v \,+\,
          \gamma\, \log\frac{e^{\beta\utilde}F(\utilde)}{e^{\beta u}F(u)}\\[1ex]
   &\,=\, (1-\beta\gamma)v \,+\,
          \gamma \log(\sigtilde/y)\,,
\edmal
and because $y$ is bounded from below, cf.~Proposition~\ref{Prop:rho2-low}, 
we can use the estimate 
\be{log-bound}
   \bigl|\log(1+x) \,-\, x\bigr| \,\leq\, 2x^2\,, 
   \qquad |x|<1/2\,,
\ee
to obtain
\begin{align}
\label{eq:Phi0diff}
   \Phi(\utilde) - \Phi(u)
   &\,=\, (1-\beta\gamma)v \,+\, \gamma\, \frac{\sigtilde-y}{y}
          \,+\, O(\norm{v}_\V^2)\\[1ex]
\nonumber
   &\,=\, (1-\beta\gamma)v \,+\, O(\norm{v}_\V)\,,
\end{align}
uniformly for $0< r\leq r_0$ and $\norm{v}_\V$ sufficiently small.
Because of \req{LJtype} and the definition~\req{Vd} of $\norm{\,\cdot\,}_\V$ 
this implies assertion~\req{Delta0Phi-klein}.

Starting from \req{Phi0diff} and inserting the representations
\req{Phiprime} and \req{Fprime} of $\Phi'(u)$ and $F'(u)$, respectively, 
it follows that
\bdmal
   &\Phi(\utilde) - \Phi(u) - \Phi'(u)v
    \,=\, \gamma\, \Bigl(\frac{\sigtilde-y}{y} \,-\, \frac{F'(u)v}{F(u)} 
                         \,-\, \beta v
                   \Bigr) 
          \,+\, O(\norm{v}_\V^2)\\[1ex]
   &\qquad \qquad
    \,=\, \frac{\gamma}{y}\, 
          \bigl(\sigtilde - y - e^{\beta u}F'(u)v - \beta vy\bigr)
          \,+\, O(\norm{v}_\V^2)\\[1ex]
   &\qquad \qquad
    \,=\, \frac{\gamma}{y}\, 
          \bigl(\sigtilde - y - (\partial y)v \bigr)
          \,+\, O(\norm{v}_\V^2)
    \,=\, O(\norm{v}_\V^2)
\edmal
by virtue of Proposition~\ref{Prop:rho2-low}, again, and this estimate also
holds uniformly for $0< r \leq r_0$. 
This proves assertion~\req{Delta1Phi-klein}.
\end{proof}

\begin{lemma}
\label{Lem:grosse_r}
Under the assumptions of Lemma~\ref{Lem:kleine_r} there exists $C>0$, such that
\begin{subequations}
\label{eq:DeltaPhi-gross}
\begin{align}
\label{eq:Delta0Phi-gross}
   \Bigl|\bigl(\Phi(\utilde) - \Phi(u)\bigr)(r)\Bigr| 
   &\,\leq\, C \norm{\utilde-u}_\V (1+r^2)^{-\alpha/2}\,,\\
\label{eq:Delta1Phi-gross}
   \Bigl|\bigl(\Phi(\utilde) - \Phi(u) - \Phi'(u)(\utilde-u)\bigr)(r)\Bigr|
   &\,\leq\, C \norm{\utilde-u}_\V^2 (1+r^2)^{-\alpha/2}\,,
\end{align}
\end{subequations}
uniformly for $r\geq r_0$, provided that $\norm{\utilde-u}_\V$ is sufficiently
small.
\end{lemma}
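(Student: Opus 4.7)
The plan is to express the radial distribution function $F(u)(r)$ for $r\geq r_0$ through the thermodynamical limit of the Ursell function, so that the decay and differentiability estimates from Proposition~\ref{Prop:Groeneveld} and Theorem~\ref{Thm:Ursell-remainder} deliver exactly the weighted bounds required on the right-hand sides of~\req{Delta0Phi-gross} and~\req{Delta1Phi-gross}. Using $g(r)=\rho^{(2)}(R,0)/\rho_0^2$ and $\omega(R)=\rho^{(2)}(R,0)-\rho_0^2$ for $|R|=r$, we have $F(u)(r)=1+\omega(R)/\rho_0^2$, and analogously $F(\utilde)(r)=1+\omtilde(R)/\tilde\rho_0^2$ for the perturbed potential $\utilde=u+v$ with $\norm{v}_\V\leq\delta_0/2$, where $\tilde\rho_0$ denotes the counting density at activity $z$ associated with $\utilde$.

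My first task would be to establish a uniform lower bound $F(u)(r)\geq F_\ast>0$ valid for all $r\geq r_0$ and all admissible $u$: for such $r$ the representation $g(r)=e^{-\beta u(r)}y(r)$, the bound $|u(r)|\leq Cr_0^{-\alpha}$ from~\req{LJtype}, and the uniform lower bound~\req{sigmastar} on the cavity distribution function from Proposition~\ref{Prop:rho2-low} supply such a bound; after possibly shrinking $\delta_0$, the same positivity applies to $F(\utilde)$. In parallel, Theorem~\ref{Thm:Ursell-remainder} combined with the Fr\'echet differentiability of the scalar counting density $\rho_0=\rho^{(1)}$ in $u$ (known from~\cite{Hank16a}) yields, by the product and quotient rules, the pointwise estimates
\begin{align*}
   |F(\utilde)(r)-F(u)(r)| &\,\leq\, C_1 \norm{v}_\V (1+r^2)^{-\alpha/2}\,,\\[0.5ex]
   \bigl|F(\utilde)(r)-F(u)(r)-\bigl(F'(u)v\bigr)(r)\bigr| &\,\leq\, C_1 \norm{v}_\V^2 (1+r^2)^{-\alpha/2}\,,
\end{align*}
uniformly for $r\geq r_0$.

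With these two ingredients in hand, the lemma follows by expanding
\[
\log\frac{F(\utilde)(r)}{F(u)(r)} \,=\, \log\Bigl(1 + \frac{F(\utilde)(r)-F(u)(r)}{F(u)(r)}\Bigr)\,;
\]
once $\norm{v}_\V$ is small enough, the argument of the logarithm is bounded in modulus by $1/2$ uniformly in $r\geq r_0$, and~\req{log-bound} gives $\log(F(\utilde)/F(u)) = (F(\utilde)-F(u))/F(u) + O\bigl((F(\utilde)-F(u))^2\bigr)$ with implicit constants depending only on $F_\ast$. Adding $v$ and invoking $|v(r)|\leq\norm{v}_\V(1+r^2)^{-\alpha/2}$ for $r\geq r_0$, immediate from the definition~\req{Vd} of $\norm{\cdot}_\V$, yields~\req{Delta0Phi-gross}. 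For~\req{Delta1Phi-gross} I would then write $\Phi(\utilde)-\Phi(u)-\Phi'(u)v = \gamma(\log(F(\utilde)/F(u)) - F'(u)v/F(u))$, insert the logarithm expansion, and combine with the second-order remainder for $F$; the squared term in the expansion contributes $O(\norm{v}_\V^2(1+r^2)^{-\alpha})$, which is dominated by the target weight $(1+r^2)^{-\alpha/2}$ for $r\geq r_0$.

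The main obstacle I anticipate is the second half of the first task, namely propagating the $\varrho$-weighted differentiability of the Ursell function from Theorem~\ref{Thm:Ursell-remainder} through the division by the $u$-dependent scalar $\rho_0^2$ without deteriorating the weight. This works precisely because $\rho_0$ is an $R$-independent scalar that is bounded uniformly away from zero for $z$ in the range~\req{zbound} and is Fr\'echet differentiable in $u$, so the quotient rule only produces remainder terms that inherit their $\varrho$-decay directly from~$\omega$.
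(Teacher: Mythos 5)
Your proposal is correct and follows essentially the same route as the paper: you use $F(u)(r)=1+\omega(R)/\rho_0^2$, derive weighted Lipschitz and second-order estimates for $F$ from Proposition~\ref{Prop:Groeneveld}, Theorem~\ref{Thm:Ursell-remainder}, and the differentiability of the scalar $\rho_0$, then combine the uniform lower bound on $F(u)$ for $r\geq r_0$ (from Proposition~\ref{Prop:rho2-low} and \req{LJtype}) with the logarithm estimate \req{log-bound}. The one cosmetic difference is that the paper does not separately invoke positivity of $F(\utilde)$; it follows automatically once $|F(\utilde)-F(u)|/F(u)\leq 1/2$, so no extra shrinking of $\delta_0$ is needed at that point.
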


\begin{proof}
Using \req{g} and \req{Ursell-limit} we readily obtain the representation
\be{F2}
   \bigl(F(u)\bigr)(r) \,=\, g(r) \,=\, 1\,+\, \frac{1}{\rho_0^2}\,\omega(R)\,,
   \qquad r=|R|\geq 0\,,
\ee
and therefore \req{Groeneveld}, \req{Ursell-difference}, and the 
differentiability of $\rho_0$ with respect to $u$ imply a local Lipschitz bound
\be{Delta0F}
   \Bigl|\bigl(F(\utilde) - F(u)\bigr)(r)\Bigr|
   \,\leq\, C_F \norm{\utilde-u}_\V (1+r^2)^{-\alpha/2}
\ee
with some $C_F=C_F(u,\zo)>0$ for all $\norm{\utilde-u}_\V$ sufficiently small
and all $r\geq 0$.
Moreover, for $r\geq 0$ and $R\in\R^3$ with $|R|=r$ we further 
deduce from \req{F2} that
\bdm
   \bigl(F'(u)v\bigr)(r)
   \,=\, \partial\Bigl(\frac{1}{\rho_0^2}\,\omega(R)\Bigr)v\,,
\edm
where the right-hand side denotes the derivative
of the scalar function $u\mapsto\omega(R)/\rho_0^2$ 
with respect to $u$ in direction $v\in\V$.
Again, using the differentiability of $\rho_0=\rho_0(u)$, 
Proposition~\ref{Prop:Groeneveld}, and 
Theorem~\ref{Thm:Ursell-remainder}, we conclude that
\be{Delta1F}
   \Bigl|\bigl(F(\utilde) - F(u) - F'(u)v\bigr)(r)\Bigr|
   \,\leq\, C_F' \norm{v}_\V^2 (1+r^2)^{-\alpha/2}\,,
\ee
for some $C_F'>0$, all $v=\utilde-u\in\V$ sufficiently small, and all $r\geq 0$.

Since $F(u)$ is uniformly bounded from below for the given value of the
activity and all $r\geq r_0$ according to Proposition~\ref{Prop:rho2-low}
and \req{LJtype},
\req{Delta0F} implies that the fraction $(F(\utilde)-F(u))/F(u)$ is bounded
by $1/2$ in absolute value, say, for $\norm{v}_\V$ sufficiently small
and all $r\geq r_0$. Accordingly,
\bdm
   \Bigl|\Phi(\utilde) - \Phi(u)\Bigr|
   \,\leq\, |v| \,+\, \gamma \,\Bigl|\log\frac{F(\utilde)}{F(u)}\Bigr|\\[1ex]
   \,=\, |v| \,+\, \gamma \,\frac{|F(\utilde)-F(u)|}{F(u)}
             \,+\, 2\gamma \,\Bigl|\frac{F(\utilde)-F(u)}{F(u)}\Bigr|^2
\edm
by virtue of \req{log-bound}. Using once again that $F(u)$ is bounded from 
below for the respective radii $r\geq r_0$, \req{Delta0F} and \req{Vd} imply 
the first assertion~\req{Delta0Phi-gross}.

Using \req{Phiprime} the same argument as before yields
\bdm
\begin{aligned}
   \bigl|\Phi(\utilde) - \Phi(u) - \Phi'(u)v\bigr|
   &\,=\, \gamma \,\Bigl|\log\frac{F(\utilde)}{F(u)}  
                            \,-\, \frac{F'(u)v}{F(u)}\Bigr|\\[1ex]
   &\,\leq\, \gamma\, \Bigl|\frac{F(\utilde)-F(u)-F'(u)v}{F(u)}\Bigr|
             \,+\, 2\gamma \,\Bigl|\frac{F(\utilde)-F(u)}{F(u)}\Bigr|^2
\edmal
for $\norm{v}_\V$ sufficiently small and all $r\geq r_0$.
The second assertion \req{Delta1Phi-gross} thus follows from \req{Delta1F}
and \req{Delta0F}.
\end{proof}

From Lemma~\ref{Lem:kleine_r} and Lemma~\ref{Lem:grosse_r} 
we immediately conclude our main result.

\begin{theorem}
\label{Thm:Phi}
Let $u\in\U$, and let $z$ satisfy \req{zbound}. 
Then there exists $C_\Phi=C_\Phi(u,\zo)>0$ such that
\bdm
   \norm{\Phi(\utilde)-\Phi(u)}_\V \,\leq\, C_\Phi \norm{\utilde-u}_\V
\edm
for $\norm{\utilde-u}_\V$ sufficiently small. Moreover, 
$\Phi$ is Fr\'echet differentiable with respect to $u$ with 
$\Phi'(u)\in\L(\V,\V)$, and
\bdm
   \norm{\Phi(\utilde)-\Phi(u)-\Phi'(u)(\utilde-u)}_\V
   \,\leq\, C_\Phi \norm{\utilde-u}_\V^2
\edm
for $\norm{\utilde-u}_\V$ sufficiently small.
\end{theorem}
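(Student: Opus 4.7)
The plan is to observe that the two preceding lemmas are already phrased in exactly the pointwise form needed by the max-norm structure of $\V$ in \req{Vd}, so the theorem is essentially obtained by taking a maximum. Recall that
\[
   \norm{v}_\V \,=\, \max\bigl\{\norm{v/u}_{(0,r_0]},\,\norm{\varrho v}_{[r_0,\infty)}\bigr\},
\]
with $\varrho(r)=(1+r^2)^{\alpha/2}$. The bound \req{Delta0Phi-klein} of Lemma~\ref{Lem:kleine_r} has the factor $u(r)$ on the right-hand side so that dividing by $u(r)$ yields a uniform bound of $C\norm{\utilde-u}_\V$ on $(0,r_0]$, while \req{Delta0Phi-gross} of Lemma~\ref{Lem:grosse_r} has the factor $(1+r^2)^{-\alpha/2}=1/\varrho(r)$ and hence, after multiplication by $\varrho(r)$, gives the same bound on $[r_0,\infty)$. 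Taking the maximum of the two estimates yields the first assertion of the theorem, with $C_\Phi$ equal to the larger of the two constants.

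For the differentiability claim, the pointwise formula \req{Phiprime} provides a natural candidate for the derivative. Applying the same procedure to the second halves \req{Delta1Phi-klein} and \req{Delta1Phi-gross} of the two lemmas — noting that \req{Delta1Phi-klein} gives a uniform bound on $(0,r_0]$ directly, which is in particular bounded by $C\norm{\utilde-u}_\V^2/u(r_0)\cdot u(r)/u(r_0)$ of the right weighted form since $u(r)\geq u(r_0)$ for $r\leq r_0$ is not automatic; more simply, $|\cdot|/u(r)\leq|\cdot|/c_0r_0^{-\alpha}$ is a uniform bound — and combining with \req{Delta1Phi-gross} multiplied by $\varrho(r)$ gives
\[
   \norm{\Phi(\utilde)-\Phi(u)-\Phi'(u)(\utilde-u)}_\V \,\leq\, C_\Phi\,\norm{\utilde-u}_\V^2
\]
for $\norm{\utilde-u}_\V$ sufficiently small.

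It remains to verify that the pointwise-defined linear operator $\Phi'(u)$ is bounded from $\V$ into $\V$. This is a standard consequence of the two estimates just obtained: for $v\in\V$ with small norm and $t\in(0,1]$,
\[
   \norm{\Phi'(u)(tv)}_\V \,\leq\, \norm{\Phi(u+tv)-\Phi(u)}_\V \,+\, \norm{\Phi(u+tv)-\Phi(u)-\Phi'(u)(tv)}_\V
   \,\leq\, C_\Phi t\norm{v}_\V + C_\Phi t^2\norm{v}_\V^2,
\]
and dividing by $t$ and letting $t\to 0^+$ yields $\norm{\Phi'(u)v}_\V\leq C_\Phi\norm{v}_\V$; by linearity this extends to all of $\V$.

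The main obstacle, such as it is, lies in reconciling the two absolute bounds \req{Delta1Phi-klein} (uniform on $(0,r_0]$ with no weight) and \req{Delta0Phi-klein} (carrying the weight $u(r)$) with the weighted sup norm on the core region; since $u$ is bounded below on $(0,r_0]$ by $c_0 r_0^{-\alpha}$ according to \req{LJtype}, division by $u(r)$ only inflates a uniform bound by the constant $1/(c_0 r_0^{-\alpha})$, so this is a cosmetic issue and does not affect the structure of the argument. All heavy lifting concerning analyticity and differentiability of the cavity and Ursell functions, and the decay of the Ursell function, has already been carried out in Proposition~\ref{Prop:rho2-low}, Proposition~\ref{Prop:Groeneveld}, and Theorem~\ref{Thm:Ursell-remainder} and been packaged into the two lemmas.
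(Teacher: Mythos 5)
Your proposal is correct and matches the paper's approach, which is simply to observe that Lemmas~\ref{Lem:kleine_r} and \ref{Lem:grosse_r} are formulated precisely so as to control the two components of the weighted max-norm~\req{Vd} on $(0,r_0]$ and $[r_0,\infty)$; the paper records the conclusion in a single sentence, and your write-up just makes the bookkeeping explicit, including the correct observation that $u(r)\geq c\,r^{-\alpha}\geq c_0 r_0^{-\alpha}$ on $(0,r_0]$ so that dividing the unweighted bound \req{Delta1Phi-klein} by $u(r)$ costs only a constant, and the standard scaling argument showing $\Phi'(u)\in\L(\V,\V)$.
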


\begin{remark}
\label{Rem:Phi}
\rm
We mention that for the particular choice 
$\gamma=1/\beta$ of the relaxation parameter in IBI, compare~\req{omega1},
the first term on the right-hand side of \req{Phi0diff} cancels, and hence,
in this particular case we have the stronger Lipschitz bounds
\bdmal
   \norm{\Phi(\utilde)-\Phi(u)}_{\Lrho}
   &\,\leq\, C_\Phi \norm{\utilde-u}_\V\,,\\[1ex]
   \norm{\Phi(\utilde)-\Phi(u)-\Phi'(u)(\utilde-u)}_\Lrho
   &\,\leq\, C_\Phi \norm{\utilde-u}_\V^2\,,
\edmal
under the same assumptions as in Theorem~\ref{Thm:Phi}.
\fin
\end{remark}

We now explain how this theorem provides a rigorous justification of IBI.
To this end we assume that the given data $\gd$ in \req{IBI} 
is the true radial distribution function associated with a pair 
potential $\ud\in\U$, i.e.,
\be{gd=Fud}
   \gd \,=\, F(\ud)\,,
\ee
and that $z$ satisfies \req{zbound}.
Further, let $u_0\in\U$ be an initial guess with
$\norm{u_0-\ud}_\Vhat$ sufficiently small.
Then it follows from Theorem~\ref{Thm:Phi} that the first iterate of IBI,
\bdm
   u_1 \,=\, u_0 \,+\, \gamma \log\frac{F(u_0)}{\gd}\,,
\edm
belongs to $\U$ again, because
\bdm
   \norm{u_1-\ud}_\Vhat =\, \norm{\Phi(u_0)-\Phi(\ud)}_\Vhat
   \leq\, C_\Phi\norm{u_0-\ud}_\Vhat\,.
\edm
Accordingly, one can continue iterating and determine further iterates
$u_2,u_3,\dots$. It stays an open problem, however, whether \emph{all} 
iterates will stay within $\U$, or even converge to $\ud$, eventually. 

In practice IBI is usually applied with the potential of mean force
as initial guess. Assuming as before that $\ud$ is a Lennard-Jones type 
pair potential and that $g=\gd$ of \req{gd=Fud} is given exactly, then
this amounts to choosing
\bdm
   u_0 \,=\, -\frac{1}{\beta}\thsp\log \gd\,.
\edm
Since the corresponding cavity distribution function $\sd=e^{\beta\ud}\thbsp\gd$
is bounded and strictly positive according to Proposition~\ref{Prop:rho2-low} 
we have
\begin{subequations}
\label{eq:pmf}
\be{pmf-1}
   -\frac{1}{\beta}\thsp\log \gd(r) 
   \,=\, \ud \,-\, \frac{1}{\beta}\thsp\log\sd \,=\, \ud \,+\, O(1)\,, 
\ee
uniformly for $0<r\leq r_0$. On the other hand it follows from
Proposition~\ref{Prop:Groeneveld} that there exists $c_g>0$ with
\bdm
   \bigl|\gd(r) - 1\bigr| \,\leq\, c_g(1+r^2)^{-\alpha/2}\,, 
\edm
which implies
\be{pmf-2}
   \bigl|\log\gd(r)\bigr| \,\leq\, C(1+r^2)^{-\alpha/2}
\ee
\end{subequations}
by virtue of \req{log-bound}
for some $C>0$ and $r$ sufficiently large. Moreover, since $\sd$ has 
strictly positive lower and upper bounds it follows from the representation 
$\gd=e^{-\beta\ud}\thbsp\sd$ of the radial distribution function
that \req{pmf-2} extends to all $r\geq r_0$ after increasing $C$ appropriately,
when necessary.

We thus conclude from \req{pmf} that the potential
of mean force is a Lennard-Jones type pair potential with the same parameter 
$\alpha$, and therefore the first iteration of IBI is well-defined
for this initial guess.

\section*{Acknowledgements}
The results of this paper have first been presented at the Oberwolfach
Mini-Workshop {\em Cluster Expansions: From Combinatorics to Analysis through 
Probability} (February 2017). 
The author is indebted to Roberto Fern\'{a}ndez, Sabine Jansen, 
and Dimitrios Tsagkarogiannis for the invitation and the
opportunity to contribute this presentation. During the workshop 
David C. Brydges, Aldo Procacci, and Daniel Ueltschi
provided arguments which have considerably simplified our original proof of 
Theorem 5.2. This input and many discussions with further
participants of this workshop are gratefully acknowledged.


\end{document}